\newcommand{\Sam}{{\cal S}}
\newcommand{\Db}{{\cal D}}
\newcommand{\Tab}{{\cal T}}
\newtheorem{corol}{Corollary}
\newtheorem{lemma}{Lemma}
\newtheorem{thm}{Theorem}
\theoremstyle{definition}
\newtheorem{defn}{Definition}
\DeclareMathOperator{\op}{op}
\DeclareMathOperator{\eqop}{eqop}
\DeclareMathOperator{\bool}{bool}
\begin{document}
\title{The VC-Dimension of SQL Queries and Selectivity Estimation Through
Sampling\thanks{Work was supported in part by NSF award IIS-0905553.}}

\author{Matteo Riondato\thanks{Contacth autor.} \and Mert Akdere \and U\v gur \c
Cetintemel \and Stanley B. Zdonik \and Eli Upfal \\ 
Department of Computer Science, Brown University, Providence, RI \\
\texttt{\{matteo,makdere,ugur,sbz,eli\}@cs.brown.edu}
}
\maketitle
\begin{abstract}
We develop a novel method, based on the statistical concept of the
\emph{Vapnik-Chervonenkis dimension}, to evaluate the selectivity (output
cardinality) of SQL queries -- a crucial step in optimizing the execution of
large scale database and data-mining operations.  The major theoretical
contribution of this work, which is of independent interest, is an explicit
bound to the VC-dimension of a range space defined by all possible outcomes of a
collection (class) of queries. We prove that the VC-dimension is a function of
the maximum number of Boolean operations in the selection predicate and of the
maximum number of select and join operations in any individual query in the
collection, but it is neither a function of the number of queries in the
collection nor of the size (number of tuples) of the database.  We leverage on
this result and develop a method that, given a class of queries, builds a
concise random sample of a database, such that with high probability the
execution of \emph{any} query in the class on the sample provides an accurate
estimate for the selectivity of the query on the original large database. The
error probability holds \emph{simultaneously} for the selectivity estimates of
\emph{all} queries in the collection, thus the same sample can be used to
evaluate the selectivity of multiple queries, and the sample needs to be
refreshed only following major changes in the database. The sample
representation computed by our method is typically sufficiently small to be
stored in main memory. We present extensive experimental results, validating our
theoretical analysis and demonstrating the advantage of our technique when
compared to complex selectivity estimation techniques used in PostgreSQL and the
Microsoft SQL Server.
\end{abstract}

\section{Introduction}\label{sec:intro}
As advances in technology allow for the collection and storage of vast
databases, there is a growing need for advanced machine learning techniques for
speeding up the execution of queries on such large datasets. In this work we
focus on the fundamental task of estimating the selectivity, or output size, of
a database query, which is a crucial step in a number of query processing tasks
such as execution plan optimization and resource allocation in parallel and
distributed databases. The task of efficiently obtaining such accurate estimates
has been extensively studied in previous work with solutions ranging from
storage of pre-computed statistics on the distribution of values in the tables,
to online sampling of the databases, and to combinations of the two
approaches~\citep{LiptonN95,LiptonNS90,HaasS92,HouOD91,HaasS95,GangulyGMS96,GantiLR00,GibbonsM98,HouOT88,LarsonLZZ07,PoosalaI97}.
Histograms, simple yet powerful statistics of the data in the tables, are the most
commonly used solution in practice, thanks to their computational and space
efficiency. However, there is an inherent limitation to the accuracy of this
approach when estimating the selectivity of queries that involve either multiple
tables/columns or correlated data. Running the query on freshly sampled data
gives more accurate estimates at the cost of delaying the execution of the query
while collecting random samples from a disk or other large storage medium and
then performing the analysis itself. This approach is therefore usually more
expensive than a histogram lookup. Our goal in this work is to exploit both the
computational efficiency of using pre-collected data and the provable accuracy
of estimates obtained by running a query on a properly sized random sample of
the database.

We apply the statistical concept of VC-dimension~\citep{VapnikC71} to develop and
analyze a novel technique to generate accurate estimates of query
selectivity. Roughly speaking, the VC-dimension of a collection of indicator
functions (hypotheses) is a measure of its complexity or expressiveness (see
Sect.~\ref{sec:vcdim} for formal definitions). A major theoretical contribution
of this work, which is of independent interest, is an explicit bound to the
VC-dimension of any class of queries, viewed as indicator functions on the
Cartesian product of the database tables. In particular, we show that the
VC-dimension of a class of queries is a function of the maximum number of
Boolean, select and join operations in any query in the class, but it is not
a function of the number of different queries in the class. By adapting a
fundamental result from the VC-dimension theory to the database setting, we
develop a method that for any class of queries, defined by its VC-dimension,
builds a concise sample of the database, such that with high probability, the
execution of \emph{any} query in the class on the sample provides an accurate
estimate for the selectivity of the query on the original large database. The
error probability holds \emph{simultaneously} for the selectivity estimate of
\emph{all} queries in the collection, thus the same sample can be used to
evaluate the selectivity of multiple queries, and the sample needs to be
refreshed only following major changes in the database. The size of the sample
does not depend on the size (number of tuples) in the database, just on the
complexity of the class of queries we plan to run, measured by its VC-dimension.
Both the analysis and the experimental results show that accurate selectivity
estimates can be obtained using a sample of a surprising small size (see
Table~\ref{tab:samplesize} for concrete values), which can then reside in main
memory, with the net result of a significant speedup in the execution of
queries on the sample. 

A technical difficulty in applying the VC-dimension results to the database
setting is that they assumes the availability of a uniform sample of the
Cartesian product of all the tables, while in practice it is more efficient to
store a sample of each table separately and run the queries on the Cartesian
product of the samples, which has a different distribution than a sample of the
Cartesian product of the tables. We develop an efficient  procedure for
constructing a sample that circumvents this problem (see
Sect.~\ref{sec:applications}).

We present extensive experimental results that validate our theoretical analysis
and demonstrate the advantage of our technique when compared to complex
selectivity estimation techniques used in PostgreSQL and the Microsoft SQL
Server. The main advantage of our method is that it gives provably accurate
predictions for the selectivities of all queries with up to a given complexity
(VC-dimension) specified by the user before creating the sample, while
techniques like multidimensional histograms or join synopses are accurate only
for the queries for which they are built.

Note that we are only concerned with estimating the selectivity of a query, not
with approximating the query answer using a sample of the database
(Das~\citeyearpar{Das09} presents a survey of the possible solutions to this latter
task). %

\paragraph{Outline.} The rest of the paper is organized as follows. We review
the relevant previous work in Sect.~\ref{sec:prevwork}. In
Sect.~\ref{sec:prelim} we formulate the problem and introduce the
Vapnik-Chervonenkis dimension and the related tools we use in developing our
results. Our main analytical contribution, a bound on the VC dimension  of class of queries is presented in Sect.~\ref{sec:vcdimqueries}.  The application of these results for selectivity estimation is
given in Sect.~\ref{sec:applications}. Experiments are presented in
Sect.~\ref{sec:experiments}. 

\section{Related Work}\label{sec:prevwork}
Methods to estimate the selectivity (or cardinality of the output) of queries
have been extensively studied in the database literature primarily due to the
importance of this task  to query plan optimization and resource allocation. A
variety of approaches have been explored, ranging from the use of sampling, both
online and offline, to the pre-computation of different statistics such as
histograms, to the application of methods from machine
learning~\citep{ChenMM90,HarangsriNS97}, data mining~\citep{GryzL04},
optimization~\citep{ChaudhuriDN07,MarklHKMST07}, and probabilistic
modeling~\citep{GetoorTK01,ReS10}.

The use of sampling for selectivity estimation has been studied mainly in the
context of online sampling~\citep{LiptonNS90,LiptonN95}, where a sample is
obtained, one tuple at a time, after the arrival of a query and it used only
to evaluate the selectivity of that query and then discarded. Sampling at random
from a large database residing
on disk is an expensive operation~\citep{Olken93,BrownH06,GemullaLH06}, and in
some cases sampling for an accurate cardinality estimate is not significantly
faster than full execution of the query~\citep{HaasNSS93,HaasNS94}.

A variety of sampling and statistical analysis techniques has been tested to 
improve the efficiency of the sampling procedures and in particular
to identify early stopping conditions. These include sequential sampling
analysis~\citep{HouOD91,HaasS92}, keeping additional statistics to improve the
estimation~\citep{HaasS95}, labelling the tuples and using label-dependent
estimation procedures~\citep{GangulyGMS96}, or applying the cumulative distribution
function inversion procedure~\citep{WuAE01}. Some work also looked at nonuniform
sampling~\citep{BabcockCD03,EstanN06} and stratified
sampling~\citep{ChaudhuriDN07,JoshiJ08}. Despite all these relevant
contributions, online sampling is still considered too expensive for most
applications. An offline sampling approach was explored by
Ngu~et~al.~\citeyearpar{NguHS04}, who used systematic sampling (requiring the tuples in
a
table to be sorted according to one of the attributes) with a
sample size dependent on the number of tuples in the table. The paper does
not give any explicit guarantee on the accuracy of their predictions.
Chaudhuri~et~al.~\citeyearpar{ChaudhuriDN07} present an approach which uses
optimization techniques to identify suitable strata before sampling. The
obtained sample is such that the mean square error in estimating the selectivity
of queries belonging to a given workload is minimized, but there is no quality
guarantee on the maximum error. Haas~\citeyearpar{Haas96} developed Hoeffding
inequalities to bound the probability that the selectivity of a query estimated
from a sample deviates more than a given amount from its expectation.
However, to estimate the selectivity for multiple queries and obtain a given
level accuracy for all of them, simultaneous statistical inference techniques
like the union bound should be used, which are known to be overly conservative
when the number of queries is large~\citep{Miller81}. On the contrary, our result
will hold simultaneously for \emph{all} queries within a given complexity (VC
dimension). 

A technical problem arises when combining join operations and sampling. As
pointed out by Chaudhuri~et~al.~\citeyearpar{ChaudhuriMN99}, the Cartesian
product of uniform samples of a number of tables is different from a uniform sample of the
Cartesian product of those tables. Furthermore, given a size $s$, it is
impossible to a priori determine two sample sizes $s_1$ and $s_2$ such that
uniform samples of these sizes from the two tables will give, when joined
together along a common column, a sample of the join table of size $s$. In
Sect.~\ref{sec:applications} we explain why only the first issue is of concern
for us and how we circumvent it.

In practice most database systems use pre-computed statistics to predict query
selectivity~\citep{HouOT88,GibbonsM98,GantiLR00,JinGJA06,LarsonLZZ07}, with
histograms being the most commonly used representation. The construction,
maintenance, and use of histograms were thoroughly examined in the
literature~\citep{JagadishKMPSS98,IoannidisP95,MatiasVW98,PoosalaHIS96}, with
both theoretical and experimental results. In particular
Chaudhuri~et~al.~\citeyearpar{ChaudhuriMN98} rigorously evaluated the size of the
sample needed for
building a histogram providing good estimates for the selectivities of a large
group of (select only, in their case) queries.
Kaushik~et~al.~\citeyearpar{KaushikNRC05} extensively compared histograms and sampling
from a
space complexity point of view, although their sample-based estimator did not
offer a uniform probabilistic guarantee over a set of queries and they only
consider the case of foreign-key equijoins. We address both these points in our
work. Although very efficient in terms of storage needs and query time, the
quality of estimates through histograms is inherently limited for complex
queries because of two major drawbacks in the use of histograms: intra-bucket
uniformity assumption (i.e., assuming a uniform distribution for the frequencies
of values in the same bucket) and inter-column independence assumption (i.e.,
assuming no correlation between the values in different columns of the same or
of different tables).  Different authors suggested solutions to improve the
estimation of selectivity without making the above
assumptions~\citep{BrunoC04,Dobra05,PoosalaI97,WangVI97,WangS03}. Among these
solutions, the use of multidimensional
histograms~\citep{BrunoCG01,PoosalaI97,SrivastavaHMKT,WangS03} seems the most
practical. Nevertheless, these techniques are not widespread due to the extra
memory and computational costs in their implementation.

Efficient and practical techniques for drawing random samples from a database
and for updating  the sample when the underlying tables evolve have been extensively
analyzed in the database
literature~\citep{BrownH06,GemullaLH06,GemullaLH07,HaasK04,JermainePA04}.

The {\em Vapnik-Chervonenkis dimension} was first introduced in a seminal
article~\citep{VapnikC71} on the convergence of probability distributions, but it
was only with the work of Haussler and Welzl~\citeyearpar{HausslerW86} and 
Blumer~et~al.~\citeyearpar{BlumerEHW89} that it was applied to computational sampling
and learning theory. Since then, VC-dimension has encountered
enormous success and application in the fields of computational
geometry~\citep{Chazelle00,Matousek02} and machine learning~\citep{AnthonyB99} but
its use in system-related fields is not as widespread. In the database
literature, it was used in the context of constraint databases to compute good
approximations of aggregate operators~\citep{BenediktL02}. VC-dimension-related
results were also recently applied in the field of database privacy by Blum,
Ligett, and Roth~\citeyearpar{BlumLR08} to show a bound on the number of queries needed
for an attacker to learn a private concept in a database.
Gross-Amblard~\citeyearpar{Gross11} showed that content with unbounded VC-dimension can
not be watermarked for privacy purposes. 

To the best of our knowledge, our work is the first to provide explicit bounds
on the VC-dimension of queries and to apply the results to query selectivity
estimation.

\section{Preliminaries}\label{sec:prelim}
We consider a database $\Db$ of $k$ tables $\Tab_1,\cdots,\Tab_k$. We denote a
column $C$ of a table $\Tab$ as $\Tab.C$ and, for a tuple $t\in\Tab$, the value
of $t$ in the column $C$ as $t.C$. We denote the domain of the values that can
appear in a column $\Tab.C$ as $D(\Tab.C)$. Our focus is on queries that combine
select and join operations, defined as follows. We do not take projection
operations into consideration because their selectivities have no impact on
query optimization.

\begin{defn}\label{def:selectquery}
  Given a table $\Tab$ with columns $\Tab.C_1,\cdots,\Tab.C_\ell$, a
  \emph{selection query} $q$ on $\Tab$ is an operation which returns a subset
  $S$ of the tuples of $\Tab$ such that a tuple $t$ of $\Tab$ belongs to $S$ if
  and only if the values in $t$ satisfy a condition $\mathcal{C}$
  (the \emph{selection predicate}) expressed by $q$. In full
  generality, $\mathcal{C}$ is the Boolean combination of clauses of the form
  $\Tab.C_i \op a_i$, where $\Tab.C_i$ is a column of $\Tab$, ``$\op$'' is one
  of $\{<,>,\ge,\le,=,\neq\}$ and $a_i$ is an element of the domain of
  $\Tab.C_i$.
\end{defn}

We assume that all $D(\Tab.C_i)$ are such that it is possible to build total
order relations on them. This assumptions does not exclude categorical domains
from our discussion, because the only meaningful values for ``$\op$'' for such
domains are ``$=$'' and ``$\neq$'', so we can just assume an arbitrary but fixed
order for the categories in the domain.

\begin{defn}\label{def:joinquery}
  Given two tables $\Tab_1$ and $\Tab_2$, a $\emph{join query}$ $q$ on a common
  column $C$ (i.e. a column present both in $\Tab_1$ and $\Tab_2$) is an
  operation which returns a subset of the Cartesian product of the tuples in
  $\Tab_1$ and $\Tab_2$. The returned subset is defined as the set
  \[
  \{(t_1,t_2) ~:~ t_1\in\Tab_1, t_2\in\Tab_2, \mbox{ s.t. } t_1.C \op t_2.C \}\]
  where ``$\op$'' is one of $\{<,>,\ge,\le,=,\neq\}$.
\end{defn}

Our definition of a join query is basically equivalent to that of a
\emph{theta-join}~\citep[Sect.5.2.7]{GarciaMolinaUW02}, with the limitation that
the join condition $\mathcal{C}$ can only contain a single clause, i.e. a single
condition on the relationship of the values in the shared column $C$ and only
involve the operators $\{<,>,\ge,\le,=,\neq\}$ (with their meaning on $D(C)$).
The pairs of tuples composing the output of the join in our definition have a
one-to-one correspondence with the tuples in the output of the corresponding
theta-join.

\begin{defn}\label{def:general query}
  Given a set of $\ell$ tables $\Tab_1,\cdots,\Tab_\ell$, a
  \emph{combination of select and join operations} is a query returning a subset
  of the Cartesian product of the tuples in the sets $S_1,\cdots,S_\ell$, where
  $S_i$ is the output of a selection query on $\Tab_i$. The returned set is
  defined by the selection queries and by a set of join queries on
  $S_1,\dots,S_\ell$.
\end{defn}

\begin{defn}\label{def:exectree}
  Given a query $q$, a \emph{query plan} for $q$ is a directed binary tree
  $T_q$ whose nodes are the elementary (i.e. select or join) operations into
  which $q$ can be decomposed. There is an edge from a node $a$ to a node $b$ if
  the output of $a$ is used as an input to $b$. The operations on the leaves of
  the tree use one or two tables of the database as input. The output of the
  operation in the root node of the tree is the output of the query.
\end{defn}

It follows from the definition of a combination of select and join operations
that a query may conform with multiple query plans. Nevertheless, for all the
queries we defined there is (at least) one query plan such that all select
operations are in the leaves and internal nodes are join
nodes~\citep{GarciaMolinaUW02}. To derive our results, we use these specific
query plans.

Two crucial definitions that we use throughout the work are the 
\emph{cardinality} of the output of a query and the equivalent concept of
\emph{selectivity} of a query.

\begin{defn}\label{def:cardsel}
  Given a query $q$ and a database $\Db$, the \emph{cardinality} of its output
  is the number of elements (tuples if $q$ is a selection queries, pairs of
  tuples if $q$ is a join query, and $\ell$-uples of tuples for combinations of
  join and select involving $\ell$ tables) in its output, when run on $\Db$. The
  \emph{selectivity} $\sigma(q)$ of $q$ is the ratio between its cardinality and
  the product of the sizes of its input tables.
 \end{defn}

Our goal is to store a succint representation (sample) $\Sam$ of
the database $\Db$ such that an execution of a query on the sample $\Sam$ will
provide an accurate estimate for the selectivity of
each operation in the query plan when executed on the database $\Db$. 

\subsection{VC-Dimension}\label{sec:vcdim}
The Vapnik-Chernovenkis (VC) Dimension of a family of indicator functions (or
equivalently a family of subsets) on a space of points is a measure of
the complexity or expressiveness of  set of functions in that structure~\citep{VapnikC71}. A finite bound
on the VC-dimension of a structure implies a bound on the number of random
samples required for approximately learning that structure. We outline here some
basic definitions and results and their adaptation to the specific setting of
queries. We refer the reader to the works of Alon and
Spencer~\citeyearpar[Sect.~14.4]{AlonS08}, Chazelle~\citeyearpar[Chap.~4]{Chazelle00}, and
Vapnik~\citeyearpar{Vapnik99} for an in-depth discussion of the VC-dimension theory.

VC-dimension is defined on {\em range spaces}:

\begin{defn}\label{defn:rangespace}
  A {\em range space} is a pair $(X,R)$ where $X$ is a (finite or infinite) set
  and $R$ is a (finite or infinite) family of subsets of $X$. The members of $X$
  are called {\em points} and those of $R$ are called {\em ranges}.
\end{defn}

In our setting, for a class of select queries $Q$ on a table $\Tab$, $X$ is the
set of all tuples in the input table, and $R$ the family of the outputs (as sets
of tuples) of the queries in $Q$ when run on 
$\Tab$. For a class
$Q$ of queries combining select and join operations, $X$ is the Cartesian
product of the associated tables and $R$ is the family of outcomes of queries in
$Q$, seen as $\ell$-uples of tuples, if $\ell$ tables are involved in the
queries of $Q$.  When the context is clear we identify the family $R$ with a
class of queries.

To define the VC-dimension of a range space we consider the projection of the
ranges into a set of points:

\begin{defn}\label{defn:proj}
  Let $(X,R)$ be a range space and $A\subset X$. The {\em projection} of $R$ on
  $A$ is defined as $P_R(A)=\{r\cap A ~:~ r\in R\}$.
\end{defn}

A set is said to be shattered if all its subsets are defined by the range space:

\begin{defn}\label{defn:shatter}
  Let $(X,R)$ be a range space and $A\subset X$. If $|P_R(A)|=2^A$, then $A$ is
  said to be {\em shattered by $R$}.
\end{defn}

The VC-dimension of a range space is the cardinality of the largest set
shattered by the space:

\begin{defn}\label{defn:VCdim}
  Let $S=(X,R)$ be a range space. The {\em Vapnik-Chervonenkis} dimension (or
  {\em VC-dimension}) of $S$, denoted as $VC(S)$ is the maximum cardinality of a
  shattered subset of $X$. If there are arbitrary large shattered subsets, then
  $VC(S)=\infty$.
\end{defn}

When the ranges represent all the possible outputs of queries in a class $Q$
applied to database tables $\Db $, the VC-dimension of the range space is the maximum
number of tuples such that any subset of them is the output of a query in $Q$.

Note that a range space $(X,R)$ with an arbitrary large set of points $X$ and
an arbitrary large family of ranges $R$ can have a bounded VC-dimension. A simple
example is the family of intervals in $[0,1]$ (i.e. $X$ is all the points in
$[0,1]$ and $R$ all the intervals $[a,b]$, such that $0\leq a\leq b\leq 1$). Let
$A=\{x,y,z\}$ be the set of three points $0<x<y<z<1$. No interval in $R$ can
define the subset $\{x,z\}$ so the VC-dimension of this range space is $< 3$.
This observation is generalized in the following result~\citep[Lemma
10.3.1]{Matousek02}:

\begin{lemma}\label{lem:matousek}
  The VC-Dimension of the range space $(\mathbb{R}^d, X)$, where $X$ is the set
  of all half-spaces in $\mathbb{R}^d$ equals $d+1$.
\end{lemma}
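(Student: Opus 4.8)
The plan is to prove that the VC-dimension of the range space $(\mathbb{R}^d, X)$, where $X$ consists of all half-spaces in $\mathbb{R}^d$, equals exactly $d+1$ by establishing the two inequalities $VC \ge d+1$ and $VC \le d+1$ separately. For the lower bound, I would exhibit a concrete set of $d+1$ points that is shattered by half-spaces: the standard choice is the set consisting of the origin together with the $d$ standard basis vectors $e_1,\dots,e_d$ (equivalently, the vertices of a $d$-simplex in general position). Given any subset $A$ of these $d+1$ points, I would construct an affine linear functional (hence a half-space) that is positive exactly on $A$ and non-positive on the complement; because the $d+1$ points are affinely independent, the linear system determining the coefficients of such a functional has a solution, and a small perturbation ensures strict separation. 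This shows $|P_X(A)| = 2^{d+1}$ for this particular $A$, so $VC \ge d+1$.

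For the upper bound, I would show that no set of $d+2$ points in $\mathbb{R}^d$ can be shattered. The key tool is Radon's theorem: any set of $d+2$ points in $\mathbb{R}^d$ can be partitioned into two disjoint subsets $A_1$ and $A_2$ whose convex hulls intersect. I would then argue that no half-space can realize the dichotomy $(A_1, A_2)$: if a half-space contained all of $A_1$ but none of $A_2$, then since half-spaces are convex it would contain $\mathrm{conv}(A_1)$, and by the same token its complement (also a half-space, up to boundary) would contain $\mathrm{conv}(A_2)$, contradicting the fact that these convex hulls share a common point. Hence the subset $A_1$ is not in the projection $P_X$ of the $d+2$ points, so they are not shattered, giving $VC \le d+1$.

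The main obstacle, such as it is, lies in handling the boundary of the half-space carefully — depending on whether half-spaces are taken to be open or closed, the common point of $\mathrm{conv}(A_1)$ and $\mathrm{conv}(A_2)$ might lie on the separating hyperplane rather than strictly inside, and one has to make sure the argument still goes through (for closed half-spaces the point would have to lie in both the half-space and its complement simultaneously, which is only possible on the boundary, and then a perturbation or a direct case analysis closes the gap). Similarly, on the lower-bound side one must confirm that strict inequalities can be achieved simultaneously for all points, which again is a routine perturbation argument. Since the statement is quoted directly as Lemma~10.3.1 of Matoušek, I would in fact simply cite it; but the Radon-theorem proof above is the standard self-contained argument and is what I would write out if a proof were expected.
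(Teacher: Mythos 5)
The paper does not prove this lemma at all: it is quoted verbatim as Lemma~10.3.1 of Matou\v{s}ek and used as an imported fact, which is exactly the fallback you mention at the end. Your self-contained argument (affinely independent simplex vertices for the lower bound, Radon's theorem for the upper bound) is the standard and correct proof of the cited result, and your boundary worry is harmless since the complement of a closed half-space is an open half-space, hence still convex, so the two convex hulls would be forced into disjoint convex sets, contradicting Radon directly.
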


The main application of VC-dimension in statistics and learning theory is its
relation to the minimum size sample needed for approximate learning of a
set of indicator functions or hypothesis. 

\begin{defn}\label{defn:eapprox}
  Let $(X,R)$ be a range space and let $A$ be a finite subset of $X$. 
  \begin{enumerate}
  \item 
  For
  $0<\varepsilon<1$, a subset $B\subset A$ is an
  $\varepsilon${\em-approximation}
  for $A$ in $(X,R)$  if $\forall r\in R$, we have  $\left|\frac{|A\cap
  r|}{|A|}-\frac{|B\cap r|}{|B|}\right| \leq \varepsilon$.
   \item For
  $0<p,\varepsilon<1$, a subset $B\subset A$ is a \emph{relative}
  $(p,\varepsilon)$\emph{-approximation}
  for $A$ in $(X,R)$ if for any range $r\in R$ such that 
  $\frac{|A\cap r|}{|A|}\geq p$ we have $\left|\frac{|A\cap r|}{|A|}-\frac{|B\cap
  r|}{|B|}\right| \leq \varepsilon\frac{|A\cap r|}{|A|}$ and for any range $r\in R$ such that 
  $\frac{|A\cap r|}{|A|}< p$ we have $\frac{|B\cap r|}{|B|} \leq (1+\varepsilon)p$.
\end{enumerate}
\end{defn}

An $\varepsilon$-approximation (resp.
a relative $(p,\varepsilon)$-approximation) can be probabilistically constructed by sampling the point
space~\citep{VapnikC71,LiLS01,HarPS11}.

\begin{thm}\label{thm:eapprox}
Let $(X,R)$ be a range space of VC-dimension at most $d$,  $A$ a finite subset of $X$, and $B\subset
  A$ a random sample of $A$ of cardinality $s$. 
  \begin{enumerate}
\item
  There is a positive constant $c$ such that for any 
  $0<\varepsilon,\delta<1$  and  
  \begin{equation}\label{eq:eapprox}
  s\ge\min\left\{|A|,\frac{c}{\varepsilon^2}\left(d+\log\frac{1}{\delta}\right)\right\},
  \end{equation}  $B$ is an $\varepsilon$-approximation for $A$ with probability at least $1-\delta$.
\item
There is a positive constant $c'$ such that for any $0<p<1$ and
  $$s\ge\min\left\{|A|,\frac{c'}{\varepsilon^2p}\left(d\log{\frac{1}{p}}+\log\frac{1}{\delta}\right)\right\}$$
  $B$ is a relative
  $(p,\varepsilon)$-approximation for $A$ with probability at least $1-\delta$.
  \end{enumerate}
\end{thm}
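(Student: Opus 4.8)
Both parts follow the classical double-sampling (symmetrization) argument of Vapnik and Chervonenkis, combined with the Sauer--Shelah bound on the size of the projections $P_R(\cdot)$; the precise values of the constants $c$ and $c'$ are then imported from the refinements of Talagrand and of Li, Long and Srinivasan (part~1) and of Har-Peled and Sharir (part~2). I describe the core for part~1 first. If $s\ge|A|$ we may simply take $B=A$, which is trivially an $\varepsilon$-approximation, so assume $s<|A|$ and let $B$ be a uniform random $s$-subset of $A$. Writing $\mu_A(r)=|A\cap r|/|A|$ and $\mu_B(r)=|B\cap r|/|B|$, we want that with probability at least $1-\delta$ the inequality $|\mu_A(r)-\mu_B(r)|\le\varepsilon$ holds \emph{for every} $r\in R$ simultaneously. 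Since $R$ may be infinite, I would first pass to an independent ghost sample $B'$ of size $s$ and, by a standard symmetrization step, reduce bounding $\Pr[\exists r:\ |\mu_A(r)-\mu_B(r)|>\varepsilon]$ by $\delta$ to bounding $\Pr[\exists r:\ |\mu_B(r)-\mu_{B'}(r)|>\varepsilon/2]$ by $\delta/2$.

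I would then condition on the combined multiset $Y=B\uplus B'$ of $2s$ points; conditionally on $Y$, the labelling of its points into $B$ and $B'$ is a uniformly random balanced partition. For a \emph{fixed} range $r$, the discrepancy $\mu_B(r)-\mu_{B'}(r)$ depends only on how the $|Y\cap r|$ points of $Y$ that lie in $r$ split between the two halves, a hypergeometric quantity that Hoeffding's inequality for sampling without replacement concentrates around $0$ with sub-Gaussian tails, giving $\Pr[\,|\mu_B(r)-\mu_{B'}(r)|>\varepsilon/2\mid Y\,]\le 2e^{-\Omega(s\varepsilon^2)}$. The point of symmetrization is that now only the traces $r\cap Y$ matter, and the number of distinct traces is $|P_R(Y)|$, which by the Sauer--Shelah lemma is at most $\sum_{i=0}^{d}\binom{2s}{i}=O\!\big((2es/d)^d\big)$ because $VC(X,R)\le d$. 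A union bound over these traces, uniform in $Y$ and therefore still valid after averaging over $Y$, bounds the overall failure probability by $O\!\big((2es/d)^d\big)\cdot 2e^{-\Omega(s\varepsilon^2)}$; requiring this to be at most $\delta/2$ and solving for $s$ already yields a bound of the form $s=O\!\big(\varepsilon^{-2}(d\log(1/\varepsilon)+\log(1/\delta))\big)$.

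For part~2 the skeleton is identical, but the additive Hoeffding estimate is replaced by a \emph{relative} (multiplicative) Chernoff bound: if $\mu_A(r)=p_r$, then the probability that $\mu_B(r)$ deviates from $p_r$ by more than an $\varepsilon$-fraction is at most $2e^{-\Omega(s\varepsilon^2 p_r)}$, a bound that degrades as $p_r\to 0$. This is exactly why a $1/p$ factor enters the sample size, and why ranges $r$ with $\mu_A(r)<p$ can only be guaranteed the one-sided estimate $\mu_B(r)\le(1+\varepsilon)p$. I would partition $R$ into geometric weight classes according to whether $\mu_A(r)$ lies in $[2^jp,2^{j+1}p)$, apply the relative bound together with the Sauer--Shelah projection count within each class, and union over the $O(\log(1/p))$ classes; the lightest class dominates the resulting sum, and, combined with the number of classes, produces the $\frac1p$ and the $\log\frac1p$ appearing in the stated size.

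The hard part is not the structure sketched above but the final quantitative bookkeeping: the naive union bound over $P_R(Y)$ carries a spurious $\log(1/\varepsilon)$ factor relative to \eqref{eq:eapprox}, and removing it -- so that $d$ and $\log(1/\delta)$ enter additively with no extra logarithm -- requires the iterated (chained) sampling analysis of Talagrand and of Li, Long and Srinivasan, and for part~2 the corresponding analysis of Har-Peled and Sharir. Since those arguments are delicate and already available in the literature, I would not reprove them but instead invoke them for the exact constants $c,c'$, presenting only the symmetrization-plus-Sauer--Shelah core above. Note also that this theorem takes the dimension $d$ as a hypothesis; bounding $d$ for the range spaces arising from classes of queries -- the place where finiteness is not automatic -- is the separate task carried out in Section~\ref{sec:vcdimqueries}.
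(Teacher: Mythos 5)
Your proposal is correct and is consistent with how the paper itself handles this statement: Theorem~\ref{thm:eapprox} is not proved in the paper but imported from the literature, with the citation \citep{VapnikC71,LiLS01,HarPS11} pointing to exactly the sources you defer to for the sharp constants. Your symmetrization-plus-Sauer--Shelah sketch is a faithful outline of the standard argument, and you correctly identify that eliminating the extra $\log(1/\varepsilon)$ factor (so that $d$ and $\log(1/\delta)$ appear additively as in \eqref{eq:eapprox}) is precisely what requires the refined analyses of Li, Long and Srinivasan and of Har-Peled and Sharir rather than the naive union bound.
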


L\"offler and Phillips~\citeyearpar{LofflerP09} showed experimentally that the constant
$c$ is approximately $0.5$. It is also interesting to note that an
$\varepsilon$-approximation of size
$O(\frac{d}{\varepsilon^2}\log{\frac{d}{\varepsilon}})$ can be built
\emph{deterministically} in time
$O(d^{3d}(\frac{1}{\varepsilon^2}\log{\frac{d}{\varepsilon}})^d|X|)$~\citep{Chazelle00}.

In Sect.~\ref{sec:applications} we use an
$\varepsilon$-approximation (or a relative $(p,\varepsilon)$-approximation)  to
compute good estimates of the selectivities of all queries in $Q$. We obtain a
small approximation set through probabilistic construction.
The challenge in applying Thm.~\ref{thm:eapprox} to our setting is computing the
VC-dimension of a range space defined by a class of queries. 
We state here a few fundamental results that will be used in our analysis. 

First, it is
clear that if the VC-dimension of a range space $(X,R)$ is $d$ then $2^d\leq
|R|$ since all subsets of some set of size $d$ are defined by members of $R$.
Next we define for integers $n>0$ and $d>0$ the {\em growth function}
$g(d,n)$ as
  \[
  g(d,n)=\sum_{i=0}^d\binom{n}{i} < n^d .\]

The growth function is used in the following results~\citep[Sect.~14.4]{AlonS08}.
\begin{lemma}[Sauer's Lemma]\label{lem:sauer}
  If $(X,R)$ is a range space of VC-dimension $d$ with $|X|=n$ points, then
  $|R|\le g(d,n) $.
\end{lemma}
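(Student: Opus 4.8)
The plan is to prove Sauer's Lemma by a double induction on the number of points $n=|X|$ and on the VC-dimension $d$. The two base cases are immediate. If $d=0$, then no single point is shattered, which forces $|R|\le 1$: any two distinct ranges $r_1\ne r_2$ would differ on some point $x\in X$, and then $\{x\}$ would be shattered. Since $g(0,n)=\binom{n}{0}=1$, the bound holds. If $n=0$, then $R\subseteq\{\emptyset\}$, so $|R|\le 1=g(d,0)$.

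For the inductive step I would fix an arbitrary point $x\in X$, put $X'=X\setminus\{x\}$, and introduce two auxiliary range spaces on $X'$. The first is the projection $R'=P_R(X')=\{r\setminus\{x\}:r\in R\}$; the second is the ``doubled'' family $R''=\{r\subseteq X' : r\in R \text{ and } r\cup\{x\}\in R\}$, consisting of exactly those subsets of $X'$ that are realized in $R$ both with and without $x$. The combinatorial heart of the argument is the identity $|R|=|R'|+|R''|$: for each $S\subseteq X'$, the number of members of $R$ whose intersection with $X'$ equals $S$ is $0$, $1$, or $2$ (namely how many of $S$, $S\cup\{x\}$ lie in $R$), and summing this count over all $S$ gives $|R|$, while the ``$\ge 1$'' part of the count sums to $|R'|$ and the ``$=2$'' part sums to $|R''|$.

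Next I would bound the VC-dimensions of the two spaces. For $A\subseteq X'$ one has $r\cap A=(r\setminus\{x\})\cap A$, so $P_{R'}(A)=P_R(A)$; hence any set shattered by $R'$ is shattered by $R$, giving $VC(X',R')\le d$, and the induction on $n$ yields $|R'|\le g(d,n-1)$. For $R''$, I would show that if $A\subseteq X'$ is shattered by $R''$ then $A\cup\{x\}$ is shattered by $R$: given $B\subseteq A$, pick $r\in R''$ with $r\cap A=B$; then $r\in R$ witnesses the trace $B$ on $A\cup\{x\}$ and $r\cup\{x\}\in R$ witnesses the trace $B\cup\{x\}$. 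Thus $VC(X',R'')\le d-1$, and the induction on $d$ yields $|R''|\le g(d-1,n-1)$. Combining, $|R|\le g(d,n-1)+g(d-1,n-1)$, and Pascal's rule $\binom{n-1}{i}+\binom{n-1}{i-1}=\binom{n}{i}$ applied termwise gives exactly $g(d,n)$.

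The argument is largely mechanical once the decomposition $R\mapsto(R',R'')$ is set up; the step that needs genuine care — and the only real obstacle, though a mild one — is the bookkeeping behind the identity $|R|=|R'|+|R''|$ together with the claim that $A\cup\{x\}$ is shattered by $R$ whenever $A$ is shattered by $R''$, since both rely on tracking which ranges contain $x$ and which do not. The base cases and the binomial manipulation are routine. (An alternative I would mention only in passing is the ``shifting''/down-compression proof, which replaces the induction with a monotone compression operator on $R$ and then bounds a downward-closed family directly; it gives the same bound but is not shorter here.)
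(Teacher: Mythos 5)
Your proof is correct: it is the classical double induction on $n$ and $d$ via the decomposition $|R|=|R'|+|R''|$, with the two VC-dimension bounds $VC(X',R')\le d$ and $VC(X',R'')\le d-1$ justified exactly as needed, and Pascal's rule closing the induction. The paper itself states Sauer's Lemma without proof, citing Alon and Spencer, and your argument is essentially the standard one given in that reference, so there is nothing to compare beyond noting that your write-up is a complete and accurate reconstruction of it.
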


\begin{corol}\label{corol:SauerProj}
  If $(X,R)$ is a range space of VC-dimension $d$, then for every finite
  $A\subset X$, $|P_R(A)|\le g(d,|A|)$.
\end{corol}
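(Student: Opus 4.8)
The plan is to deduce this directly from Sauer's Lemma by passing to the range space \emph{induced on} $A$. First I would form the range space $(A, P_R(A))$, whose point set is the finite set $A$ and whose ranges are exactly the traces $r \cap A$ for $r \in R$. By construction this range space has exactly $|A|$ points and exactly $|P_R(A)|$ ranges, so Sauer's Lemma (Lemma~\ref{lem:sauer}) yields $|P_R(A)| \le g(d', |A|)$, where $d'$ denotes the VC-dimension of $(A, P_R(A))$. It therefore suffices to show $d' \le d$ and then to invoke monotonicity of $g$ in its first argument to conclude $g(d', |A|) \le g(d, |A|)$.

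The key step is the bound $d' \le d$. Let $B \subseteq A$ be any subset shattered by $P_R(A)$. I would compute the projection of $P_R(A)$ onto $B$: since $B \subseteq A$,
\[
P_{P_R(A)}(B) = \{ (r \cap A) \cap B : r \in R \} = \{ r \cap B : r \in R \} = P_R(B).
\]
Hence $B$ is shattered by $P_R(A)$ if and only if it is shattered by $R$; in particular every subset of $A$ shattered by $P_R(A)$ is shattered by $R$, so its cardinality is at most $VC((X,R)) = d$. Taking the maximum over all such $B$ gives $d' \le d$, and combining the two steps yields $|P_R(A)| \le g(d', |A|) \le g(d, |A|)$, which is the claim.

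There is no genuine obstacle here: the only point requiring care is the elementary ``projection of a projection is a projection'' identity above, which is exactly what guarantees that restricting the ground set to $A$ cannot increase the VC-dimension; everything else is a direct application of Sauer's Lemma together with the observation that the growth function is nondecreasing in its first argument. (If one wishes to avoid even this last remark, one can note directly that $\sum_{i=0}^{d'}\binom{|A|}{i} \le \sum_{i=0}^{d}\binom{|A|}{i}$, since $d' \le d$ and every binomial coefficient is nonnegative.)
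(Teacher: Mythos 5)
Your proof is correct and is exactly the standard derivation: the paper states this corollary without proof (citing Alon and Spencer), and the intended argument is precisely yours---pass to the induced range space $(A,P_R(A))$, observe via the identity $P_{P_R(A)}(B)=P_R(B)$ for $B\subseteq A$ that its VC-dimension is at most $d$, and apply Sauer's Lemma together with monotonicity of $g$ in its first argument. Nothing is missing.
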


Our analysis in Sect.~\ref{sec:vcdimqueries} uses the following bound which is an extension
of~\citep[Corol.~14.4.3]{AlonS08} to arbitrary combinations of set operations.

\begin{lemma}\label{lem:genboolcomp}
   Let $(X,R)$ be a range space of VC-dimension $d\ge 2$ and let $(X,R_h)$ be the
  range space on $X$ in which $R_h$ include all possible combinations of 
    union and intersections of $h$ members of $R$. Then $VC(X,R_h)\le
  3dh\log(dh)$.
\end{lemma}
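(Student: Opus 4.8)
The plan is to bound $VC(X, R_h)$ by combining Sauer's Lemma (Lemma~\ref{lem:sauer}) with a counting argument on how many distinct sets a Boolean combination of $h$ ranges can carve out of a finite point set. Suppose $A \subset X$ is shattered by $R_h$, with $|A| = m$. Since $A$ is shattered, every one of the $2^m$ subsets of $A$ is of the form $r \cap A$ for some $r \in R_h$, and each such $r$ is built from some choice of $h$ members $r_1, \dots, r_h \in R$ by taking unions and intersections. The key observation is that the trace on $A$ of any union/intersection combination of $r_1, \dots, r_h$ is completely determined by the traces $r_1 \cap A, \dots, r_h \cap A$: once we know which of the $h$ ranges each point of $A$ falls into, the membership of that point in any fixed Boolean combination is fixed. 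Hence the number of distinct subsets of $A$ realizable by $R_h$ is at most the number of ways to choose the ordered tuple $(r_1 \cap A, \dots, r_h \cap A)$ of traces, times the number of distinct Boolean formulas in $h$ variables --- but since a formula acts as a function $\{0,1\}^h \to \{0,1\}$ and each point of $A$ only sees its own membership pattern, the cleanest bound is: the subsets of $A$ realizable by a \emph{fixed} choice of $r_1, \dots, r_h$ number at most $2^{2^h}$ (the number of Boolean functions on $h$ bits), while the number of choices of $(r_1 \cap A, \dots, r_h \cap A)$ is at most $|P_R(A)|^h$.

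The next step is to estimate $|P_R(A)|$. Since $(X,R)$ has VC-dimension $d$, Corollary~\ref{corol:SauerProj} gives $|P_R(A)| \le g(d,m) < m^d$. Combining, the number of subsets of $A$ that $R_h$ can realize is at most $m^{dh} \cdot 2^{2^h}$. But $A$ is shattered by $R_h$, so this quantity must be at least $2^m$. Thus we obtain the inequality
\begin{equation*}
2^m \le m^{dh} \cdot 2^{2^h},
\end{equation*}
i.e. $m \le dh \log_2 m + 2^h$. The final step is purely to solve this transcendental inequality for $m$ and check that $m \le 3dh\log(dh)$ suffices to violate it (so the VC-dimension cannot exceed that bound). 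One plugs $m = 3dh\log(dh)$ into the right-hand side, uses $\log_2 m = O(\log(dh))$, and verifies that $dh\log_2 m \le \frac{2}{3} m$ or so while the additive $2^h$ term is absorbed, using $d \ge 2$ so that $dh\log(dh)$ dominates $2^h$ only in the relevant regime --- here one must be a little careful because $2^h$ grows faster than any polynomial in $h$, so the $2^{2^h}$ factor is potentially the dangerous term.

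I expect the main obstacle to be exactly that $2^h$ term: the naive count of $2^{2^h}$ Boolean functions is far too generous and would swamp the claimed $O(dh\log(dh))$ bound, since $2^h$ already exceeds $3dh\log(dh)$ for large $h$. The fix --- and the real content of the lemma --- is to observe that not all $2^{2^h}$ Boolean functions can arise: a union/intersection combination of $h$ sets, without complementation, is a \emph{monotone} Boolean formula, and more to the point, what matters is not the number of formulas but the number of distinct \emph{subsets of $A$} they produce. Once the traces $r_1 \cap A, \dots, r_h \cap A$ are fixed, varying the formula can only produce as many subsets of $A$ as there are distinct restrictions of monotone formulas to the (at most $2^h$, but really at most $m$) distinct membership-patterns appearing among points of $A$; this is bounded by $2^{\min(m, 2^h)} \le 2^m$ trivially, which is circular, so the argument must instead avoid ever introducing a $2^{2^h}$ and work directly: every subset realized by $R_h$ on $A$ is a union/intersection of sets from $P_R(A)$, and the number of such combinations of $h$ sets from a family of size $|P_R(A)| < m^d$ is at most (number of $h$-subsets) $\times$ (number of formula shapes), where the formula count is replaced by the observation that $h$ sets generate a Boolean sublattice, or more simply by following the inductive doubling argument of Corollary~14.4.3 in Alon--Spencer: handle one set operation at a time, each at most squaring the projection count, giving $|P_{R_h}(A)| \le g(d,m)^{O(h)}$, and then solve $2^m \le g(d,m)^{ch}$, i.e. $m \le chd\log m$, which yields $m \le 3dh\log(dh)$ after a standard manipulation (using $d \ge 2$ to control the $\log m$ versus $\log(dh)$ comparison). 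So the crux is to route entirely through projection-count bounds of the form $g(d,m)^{O(h)}$ and never through $2^{2^h}$.
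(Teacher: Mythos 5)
Your final route is the paper's route --- bound $|P_{R_h}(A)|$ by (number of ways to pick the $h$ traces from $P_R(A)$) times (number of formula shapes), use Sauer to get the first factor below $n^{dh}$, and solve $2^n > |P_{R_h}(A)|$ for $n$ --- but the one step that carries all the content, the count of formula shapes, is left unresolved in your write-up. You correctly diagnose that $2^{2^h}$ is fatal, but none of your three proposed replacements is actually pinned down: ``$h$ sets generate a Boolean sublattice'' does not help, since the free distributive lattice on $h$ generators has Dedekind-number size, which is itself doubly exponential in $h$; and ``handle one set operation at a time, each at most squaring the projection count, giving $g(d,m)^{O(h)}$'' is wrong as stated, because squaring $h-1$ times gives an exponent of $2^{h-1}$, while a balanced doubling in $\log h$ rounds only captures balanced formulas. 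The point you are missing is the precise sense of ``combination of $h$ members'': each chosen range appears once as a leaf of a binary formula tree, so the number of combinations is at most $h!\cdot 2^{h-1}\cdot C_{h-1}\le h^{2h}$ (assign the $h$ traces to the leaves of one of $C_{h-1}$ binary tree shapes, and pick $\cup$ or $\cap$ at each of the $h-1$ internal nodes). This is singly exponential in $h\log h$, not in $2^h$, which is exactly why the $2^h$ additive term you were worried about never appears.

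With that count in hand the paper concludes as you intended: $|P_{R_h}(A)|\le h^{2h}n^{dh}$, and $2^n>h^{2h}n^{dh}$ already holds for $n\ge 3dh\log(dh)$, the hypothesis $d\ge 2$ being used to absorb the $2h\log h$ term into $dh\log(dh)$. So your skeleton and your final inequality are right; what is missing is the Catalan-tree count that justifies the middle factor, and you should strike the ``squaring at each step'' variant, which does not yield $g(d,m)^{O(h)}$.
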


\begin{proof}
  Let $A$ be an arbitrary subset of cardinality $n$ of $X$. We have $|P_R(A)|\le
  g(d,n) \le n^d$. There are 
  \[
  \binom{|P_R(A)|}{h}\le \binom{g(d,n)}{h}\le n^{dh} \]
  possible choices of $h$ members of $P_R(A)$, and there are no more
  than 
  \[
  h! 2^{h-1} C_{h-1}\leq h^{2h} \]
  Boolean combinations using unions and intersections of the $h$ sets, where
  $C_{h-1}$ is the $(h-1)^{\mathrm{th}}$ Catalan number
  ($C_i=\frac{1}{i+1}\binom{2i}{i}$). If 
  \[ 
  2^n> h^{2h}n^{dh} \geq |P_{R_h}(A)|\]
  then $A$ cannot be shattered. This inequality holds for $n\ge 3dh\log(dh)$
\end{proof}

\section{The VC-dimension of Classes of Queries}\label{sec:vcdimqueries}
In this section we develop a general bound on the VC-dimension of classes of
queries. We start by computing the VC-dimension of simple select queries on one
column and then move to more complex types of queries (multi-attributes select
queries, join queries). We then extend our bounds to general queries that are
combinations of multiple select and join operations.

\subsection{Select Queries}\label{sec:vcdimselqueries}
Let $\Tab$ be a table with $m$ columns $\Tab.C_1,\dotsc,\Tab.C_m$, and $n$
tuples. For a fixed column $\Tab.C$, consider the set $\Sigma_C$ of the
selection queries in the form 
\begin{equation}\label{eq:select}
\mbox{\texttt{SELECT }} *\mbox{\texttt{ FROM }} \Tab \mbox{\texttt{ WHERE }}
\Tab.C_i \op a \end{equation}
where $\op$ is an inequality operator (i.e., either ``$\ge$'' or
``$\le$'')\footnote{The operators
``$>$'' and ``$<$'' can be reduced to ``$\ge$'' and ``$\le$'' respectively.} and
$a\in D(\Tab.C)$. 

Let $q_1,q_2\in\Sigma_C$ be two queries. We say that $q_1$ is equivalent to
$q_2$ (and denote this fact as $q_1=q_2$) if their outputs are identical, i.e.,
they return the same set of tuples when they are run on the same database. Note
that $q_1=q_2$ defines a proper equivalence relation. 

Let $\Sigma^*_C \subseteq \Sigma_C$ be the maximum subset of $\Sigma_C$ that
contains no equivalent queries, i.e. it contains one query from each equivalent
class.
%
%


\begin{lemma}\label{lem:vcdimselmulcol}
  Let $\Tab$ be a table with $m$ columns $C_i$, $1\le i\le m$, and consider the
  set of queries 
  \[\Sigma^*_\Tab=\bigcup_{i=1}^m \Sigma^*_{C_i},\]
  where
  $\Sigma^*_{C_i}$ is defined as in the previous paragraph. Then, the range
  space $S=(\Tab,\Sigma^*_\Tab)$ has VC-dimension at most $m+1$.
\end{lemma}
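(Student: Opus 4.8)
The plan is to reduce the shattering question for $\Sigma^*_\Tab$ to a geometric one about half-spaces, so that Lemma~\ref{lem:matousek} applies. First I would observe that each query in $\Sigma^*_{C_i}$ is of the form ``$\Tab.C_i \ge a$'' or ``$\Tab.C_i \le a$'', i.e. it is a one-dimensional threshold on the $i$-th coordinate. The natural move is to map each tuple $t\in\Tab$ to the point $\phi(t)=(t.C_1,\dots,t.C_m)\in\mathbb{R}^m$ (using the total orders on the domains $D(\Tab.C_i)$ to embed each column into $\mathbb{R}$). Under this map, the output of a query ``$\Tab.C_i\ge a$'' is exactly the set of tuples whose image lies in the half-space $\{x\in\mathbb{R}^m : x_i \ge a\}$, and similarly for ``$\le$''. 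So every range in $\Sigma^*_\Tab$ is the trace on $\phi(\Tab)$ of an axis-parallel half-space of $\mathbb{R}^m$.

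Next I would invoke the fact that axis-parallel half-spaces are a sub-family of all half-spaces in $\mathbb{R}^m$: if $(X,R')$ denotes the range space of all half-spaces on the point set $\phi(\Tab)\subset\mathbb{R}^m$, then each range of $S$ is also a range of $(X,R')$, hence $P_{\Sigma^*_\Tab}(A)\subseteq P_{R'}(A)$ for every $A$, and therefore $VC(S)\le VC(\mathbb{R}^m, \text{half-spaces})$. By Lemma~\ref{lem:matousek} the latter is $m+1$, which gives the claimed bound. (One should check that the map $\phi$ being non-injective when two tuples agree on all columns is harmless: if $\phi(t)=\phi(t')$ then $t$ and $t'$ are never separated by any query, so a shattered set cannot contain both; we may restrict attention to tuples with distinct images, and $VC$ can only go down.)

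The one subtlety worth spelling out — and the step I expect to need the most care — is the passage from $\Sigma_C$ to $\Sigma^*_C$ and the role of the domains. Because a column domain $D(\Tab.C_i)$ may be discrete, the thresholds $a$ range over a discrete set and many queries coincide; taking $\Sigma^*_{C_i}$ (one representative per equivalence class) does not add any new ranges, it only removes duplicates, so the projection $P_{\Sigma^*_\Tab}(A)$ is unchanged and the reduction above is unaffected. I would remark that one could alternatively prove the bound directly: a set $A$ of $m+2$ tuples cannot be shattered because, by a pigeonhole/extremal argument, for each coordinate $i$ the tuple(s) of $A$ extremal in coordinate $i$ cannot be ``cut off'' from the interior ones in the way shattering would require — but routing through Lemma~\ref{lem:matousek} is cleaner and avoids re-deriving a special case of the half-space bound. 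The only place a genuine inequality is lost is that $m+1$ is an upper bound, not necessarily tight for axis-parallel half-spaces, which is exactly what the statement claims (``at most $m+1$'').
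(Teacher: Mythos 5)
Your proposal is correct and follows essentially the same route as the paper's proof: embed each tuple as a point of the $m$-dimensional product of the column domains, observe that every query in $\Sigma^*_\Tab$ traces an (axis-parallel) half-space on this point set, and conclude via Lemma~\ref{lem:matousek} that the VC-dimension is at most $m+1$. The extra care you take with the non-injectivity of the embedding and with the passage from $\Sigma_{C_i}$ to $\Sigma^*_{C_i}$ is sound but not needed beyond what the paper already (implicitly) does.
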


\begin{proof}
  We can view the tuples of $\Tab$ as points in the $m$-dimensional space
  $\Delta=D(\Tab.C_1)\times D(\Tab.C_2) \times \cdots\times D(\Tab.C_m)$. A tuple
  $t\in\Tab$ such that $t.C_1=a_1, t.C_2=a_2, \ldots, t.C_m=a_m$ is represented
  on the space by the point $(a_1,a_2,\cdots,a_m)$.

  The queries in $\Sigma^*_\Tab$ can be seen as half spaces of $\Delta$. In
  particular any query in $\Sigma^*_\Tab$ is defined as in~\eqref{eq:select}
  and can be seen as the half space $\{(x_1,\cdots,x_i,\cdots,x_m) ~:~ x_j\in
  D(\Tab_j)\mbox{ for } j\neq i, \mbox{ and } x_i \op a\}\subseteq \Delta.$ It
  then follows from Lemma~\ref{lem:matousek} that $VC(S)\le m+1$.
\end{proof}

We now extend these result to general selection queries. Consider the set
$\Sigma^{2*}_\Tab$ of queries whose selection predicate can be expressed as the
Boolean combination of the selection predicates of at most two queries from
$\Sigma^*_\Tab$. These are the queries of the form:
\[
\mbox{\texttt{SELECT }} * \mbox{\texttt{ FROM }} \Tab \mbox{\texttt{ WHERE }}
\Tab.X_1 \op_1 a_1 \bool \Tab.X_2 \op_2 a_2
\]
where $\Tab.X_1$ and $\Tab.X_2$ are two columns from $\Tab$ (potentially,
$\Tab.X_1=\Tab.X_2$), $a_1\in D(\Tab.X_1)$, $a_2\in D(\Tab.X_2)$, ``$\op_i$'' is
either ``$\ge$'' or ``$\le$'' and ``$\bool$'' is either ``\texttt{AND}'' or
``\texttt{OR}''. Note that, in particular the queries in the form
\[
\mbox{\texttt{SELECT }} * \mbox{\texttt{ FROM }} \Tab \mbox{\texttt{ WHERE }} \Tab.X_1 \eqop a
\]
where $\eqop$ is either ``$=$'' or ``$\neq$'', belong to $\Sigma^{2*}_\Tab$
because we can rewrite a selection predicate containing one of these operators as
a selection predicate of two clauses using ``$\ge$'' and ``$\le$'' joined by
either $AND$ (in the case of ``$=$'') or $OR$ (in the case of ``$\neq$'').

By applying Lemma~\ref{lem:genboolcomp}, we have that the VC-dimension of the
range space $(\Tab,\Sigma^{2*}_\Tab)$ is at most $2(m+1)2\log((m+1)2)$, where
$m$ is the number of columns in the table $\Tab$.

We can generalize this result to $b$ Boolean combinations of selection
predicates as follows.

\begin{lemma}\label{lem:vcdimselgen}
  Let $\Tab$ be a table with $m$ columns, let $b>0$ and let $\Sigma^{b*}_\Tab$
  be the set of selection queries on $\Tab$ whose selection predicate is a
  Boolean combination of $b$ clauses. Then, the VC-dimension of the range space
  $S_b = (\Tab, \Sigma^{b*}_\Tab)$ is at most $3((m+1)b)\log((m+1)b)$.  
\end{lemma}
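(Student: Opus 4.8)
The plan is to combine the single-column half-space bound of Lemma~\ref{lem:vcdimselmulcol} with the general Boolean-combination bound of Lemma~\ref{lem:genboolcomp}, exactly as was done for the case $b=2$ immediately before the statement. Concretely, every query in $\Sigma^{b*}_\Tab$ has a selection predicate that is a Boolean combination of at most $b$ clauses, and each individual clause is (after the reductions already noted, i.e.\ rewriting ``$>$''/``$<$'' in terms of ``$\ge$''/``$\le$'', and rewriting ``$=$''/``$\neq$'' as an AND/OR of two inequality clauses) a member of $\Sigma^*_\Tab$, hence a range of the range space $S=(\Tab,\Sigma^*_\Tab)$. By Lemma~\ref{lem:vcdimselmulcol} we have $d := VC(S)\le m+1$.

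Next I would observe that the output of a query in $\Sigma^{b*}_\Tab$ is precisely the set of tuples in some fixed Boolean combination (using unions for OR and intersections for AND) of at most $b$ ranges from $R=\Sigma^*_\Tab$; negations do not arise because the operator set $\{\ge,\le\}$ is closed under complementation within a column (the complement of $\Tab.C\ge a$ is $\Tab.C < a$, handled by the footnote reduction), so every predicate can be pushed to a monotone Boolean combination of the base clauses. Therefore $\Sigma^{b*}_\Tab \subseteq R_b$ in the notation of Lemma~\ref{lem:genboolcomp}, and a subspace of a range space has VC-dimension no larger than the ambient one. Applying Lemma~\ref{lem:genboolcomp} with this $d$ and with $h=b$ gives $VC(S_b)\le 3db\log(db)\le 3(m+1)b\log((m+1)b)$, which is the claimed bound. (One should note Lemma~\ref{lem:genboolcomp} requires $d\ge 2$; since a one-column table is uninteresting and $m\ge 1$ already gives $m+1\ge 2$, this is harmless, and in any case the bound is monotone so we may use $d=\max\{2,m+1\}=m+1$.)

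The only real subtlety — and the step I would be most careful about — is the reduction of an arbitrary clause to a member of $\Sigma^*_\Tab$ combined with the bookkeeping of how many base ranges are actually used. A clause of the form $\Tab.X = a$ expands to two base clauses, so a predicate that is literally a Boolean combination of $b$ such equality clauses is a combination of up to $2b$ base ranges rather than $b$. I would handle this either by absorbing the factor $2$ into the definition (interpreting ``$b$ clauses'' already at the level of $\ge$/$\le$ atoms, which is the natural reading and matches how the $b=2$ paragraph treats $\Sigma^{2*}_\Tab$), or by noting that replacing $b$ with $2b$ only changes constants and the bound $3((m+1)b)\log((m+1)b)$ is stated up to such constants in spirit; the cleanest exposition is to state that each clause contributes at most one (after the $>,<$ reduction) base half-space and fold the equality/inequality rewriting into the meaning of ``clause,'' so that $h=b$ is legitimate. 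Everything else is a direct substitution into the two cited lemmas and requires no further calculation.
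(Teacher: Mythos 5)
Your proposal matches the paper's own proof: both view a query in $\Sigma^{b*}_\Tab$ as a union/intersection combination of at most $b$ ranges from $\Sigma^*_\Tab$, take $d=m+1$ from Lemma~\ref{lem:vcdimselmulcol}, and apply Lemma~\ref{lem:genboolcomp} with $h=b$. Your extra bookkeeping about equality clauses expanding into two inequality atoms is a detail the paper only addresses informally in the $b=2$ discussion, but it does not change the argument.
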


Note that we can not apply the bound
used in the proof of Lemma~\ref{lem:vcdimselmulcol} since
not all queries in $\Sigma^{b*}_\Tab$ are equivalent to axis-aligned
boxes. Once we apply Boolean operations on the outputs of the individual select
operation, the set of possible outputs, $S_b = (\Tab, \Sigma^{b*}_\Tab)$ , may
form complex subsets, including unions of disjoint (half-open) axis
aligned-rectangles and/or intersections of overlapping ones that cannot be
represented as a collection of half spaces. Thus, we need to apply a different technique here.

\begin{proof} 
 The output of a query $q$ in $\Sigma^{b^*}_\Tab$ can be seen as the Boolean
 combination (i.e. union and intersection) of the outputs of at most $b$
 "simple" select queries $q_i$ from $\Sigma^{*}_\Tab$ where each of these
 queries $q_i$ is as in~\eqref{eq:select}. An \texttt{AND} operation in the
 predicate of $q$ implies an intersection of the outputs of  the corresponding
 two queries $q_i$ and $q_j$, while an \texttt{OR} operation implies a union of
 the outputs. By applying Lemma~\ref{lem:genboolcomp}, we obtain the result.
\end{proof}

\subsection{Join Queries}\label{sec:vcdimjoinqueries}
Let $\Tab_1$ and $\Tab_2$ be two distinct tables, and let $R_1$ and $R_2$
be two families of (outputs of) select queries on the tuples of $\Tab_1$ and
$\Tab_2$ respectively. Let $S_1=(\Tab_1,R_1)$, $S_2=(\Tab_2,R_2)$ and let
$VC(S_1),VC(S_2)\geq 2$. Let $C$ be a column along which $\Tab_1$ and $\Tab_2$
are joined, and let $T_J=\Tab_1\times\Tab_2$ be the Cartesian product of the two
tables.

For a pair of queries $r_1\in R_1$, $r_2\in R_2$, let
\[
J^{\op}_{r_1,r_2}=\{(t_1,t_2) ~:~ t_1\in
r_1, t_2\in r_2, t_1.C \op t_2.C\},\]
where $\op\in\{>,<,\ge,\le,=,\neq\}$. $J^{\op}_{r_1,r_2}$ is the set of ordered pairs of
tuples (one from $\Tab_1$ and one from $\Tab_2$ that forms the output of the
join query 
\begin{equation}\label{eq:joinq}
\mbox{\texttt{SELECT }} *\mbox{\texttt{ FROM }} \Tab_1,\Tab_2 \mbox{\texttt{
WHERE }} r_1 \mbox{\texttt{ AND }} r_2.
\end{equation}
Here we simplify the notation by identifying select queries with
their predicates. We have $J^{\op}_{r_1,r_2}\subseteq
r_1\times r_2$ and $J^{\op}_{r_1,r_2}\subseteq T_J$. Let 
\[ J_C =\{J^{\op}_{r_1,r_2}~|~r_1\in R_1, r_2,R_2, \op\in\{>,<,\ge,\le,=,\neq\}
\}.\]
$J_C$ i the set of outputs of all join queries like the one in~\eqref{eq:joinq},
for all pairs of queries in $R_1\times R_2$ and all values of ``$\op$''. We
present here an upper bound to the VC-dimension of the range space
$S_J=(T_J,J_C)$.

\begin{lemma}\label{lem:vcdimjoin}
  $VC(S_J)\leq 3(VC(S_1)+VC(S_2))\log ((VC(S_1)+VC(S_2))).$
\end{lemma}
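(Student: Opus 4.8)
The plan is to reduce the join range space to a Boolean-combination range space on a suitable point set, so that Lemma~\ref{lem:genboolcomp} applies, mirroring the strategy used for select queries. The key observation is that for any output $J^{\op}_{r_1,r_2} \in J_C$, we can decompose it as an intersection of three ``simpler'' sets living on the point set $T_J = \Tab_1 \times \Tab_2$: the set $r_1 \times \Tab_2$ (all pairs whose first coordinate lies in $r_1$), the set $\Tab_1 \times r_2$ (all pairs whose second coordinate lies in $r_2$), and the set $K^{\op} = \{(t_1,t_2) : t_1.C \op t_2.C\}$ determined purely by the join condition on column $C$. Indeed $J^{\op}_{r_1,r_2} = (r_1 \times \Tab_2) \cap (\Tab_1 \times r_2) \cap K^{\op}$.

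**First I would** bound the VC-dimension of the three constituent range spaces on the point set $T_J$. For the ``lifted'' range space $(T_J, \{r_1 \times \Tab_2 : r_1 \in R_1\})$: a subset $A \subseteq T_J$ is shattered only if the projections of its points onto the first coordinate are shattered by $R_1$ in $S_1$ (two points of $A$ sharing a first coordinate can never be separated), so this range space has VC-dimension at most $VC(S_1)$. Symmetrically the range space of sets $\Tab_1 \times r_2$ has VC-dimension at most $VC(S_2)$. For the join-condition range space $(T_J, \{K^{\op} : \op \in \{>,<,\ge,\le,=,\neq\}\})$, the family has only six members, so its VC-dimension is at most $\lfloor \log_2 6 \rfloor = 2$; more simply one can absorb each $K^{\op}$ into the analysis by noting $K^{=}$ and $K^{\neq}$ are complements, and $K^{<}, K^{\le}$ etc.\ are ``one-dimensional half-space-like'' conditions on the single derived quantity $t_1.C - t_2.C$, giving VC-dimension $\le 2$ there as well.

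**Then I would** combine. Let $d = VC(S_1) + VC(S_2)$. Taking the union of the three families above yields a range space $(T_J, R')$ whose VC-dimension is at most $VC(S_1) + VC(S_2) + 2 \le \max(d, 2) + 2$; since we assume $VC(S_1), VC(S_2) \ge 2$ we have $d \ge 4$, so this is at most $d + 2 \le \frac{3}{2}d$, and in particular it is some $d' = O(d)$. Every range in $J_C$ is an intersection of exactly three members of $R'$, hence a Boolean combination of $h = 3$ members of $R'$. Applying Lemma~\ref{lem:genboolcomp} with this $d'$ and $h = 3$ gives $VC(S_J) \le 3 d' h \log(d' h) = O(d \log d)$, and a careful choice of constants collapses this to the stated bound $3(VC(S_1)+VC(S_2))\log(VC(S_1)+VC(S_2))$.

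**The hard part will be** the bookkeeping of constants: Lemma~\ref{lem:genboolcomp} with $h = 3$ on a range space of VC-dimension $\sim d + 2$ naively gives a leading factor around $9(d+2)\log(3(d+2))$, which is larger than the claimed $3d \log d$. To recover the exact statement one must either sharpen the count of Boolean combinations of exactly three sets (only $K^{\op}$ sits on the ``outside'', and the intersection structure is fixed rather than an arbitrary formula on three leaves, so the Catalan/permutation factor $h!\,2^{h-1}C_{h-1}$ is a gross overcount here), or re-run the shattering inequality $2^n > (\text{\#range choices})$ directly with the tighter count, as in the proof of Lemma~\ref{lem:genboolcomp}. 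I expect the intended argument simply invokes Lemma~\ref{lem:genboolcomp} with $h=3$ and swallows the slack by the monotonicity of $x \mapsto x \log x$ together with the hypothesis $VC(S_i) \ge 2$; I would follow that route and verify the resulting inequality holds for all $d \ge 4$.
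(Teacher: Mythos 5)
Your decomposition $J^{\op}_{r_1,r_2}=(r_1\times\Tab_2)\cap(\Tab_1\times r_2)\cap K^{\op}$ is sound, and the three constituent bounds (the lifted families inherit $VC(S_1)$ and $VC(S_2)$ because pairs sharing a coordinate cannot be separated, and the six sets $K^{\op}$ give dimension at most $2$) are all correct. This is a genuinely different route from the paper's, and it does yield $VC(S_J)=O\bigl((v_1+v_2)\log(v_1+v_2)\bigr)$. But the final step as you describe it does not close: invoking Lemma~\ref{lem:genboolcomp} with $h=3$ on a union family of dimension $d'\ge v_1+v_2+2$ gives roughly $9d'\log(3d')$, which strictly exceeds $3(v_1+v_2)\log(v_1+v_2)$ for every admissible pair $v_1,v_2\ge 2$; monotonicity of $x\mapsto x\log x$ only makes the discrepancy worse, so the slack cannot be ``swallowed.'' Your own diagnosis is the right one --- the generic count of Boolean formulas over $h$ leaves is a gross overcount when the formula is a fixed three-way intersection --- and your proposed fallback (re-run the shattering inequality with a tighter count) is essentially what the paper actually does.

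Concretely, the paper's proof takes a shattered $A\subseteq T_J$ with $|A|=v$, forms the cross-sections $A_1\subseteq\Tab_1$ and $A_2\subseteq\Tab_2$, and bounds $|P_{R_i}(A_i)|\le g(v_i,v)\le v^{v_i}$ by Corollary~\ref{corol:SauerProj}. It then argues that every $r\in P_{J_C}(A)$ is determined by the triple $\bigl(r_1\cap A_1,\ r_2\cap A_2,\ \op\bigr)$: for a fixed $\op$, two pairs inducing the same traces on $A_1$ and $A_2$ produce the same subset of $A$, so each element of $P_{R_1}(A_1)\times P_{R_2}(A_2)$ is used at most six times. Hence $|P_{J_C}(A)|\le 6\,v^{v_1+v_2}$, and $6v^{v_1+v_2}<2^v$ once $v>3(v_1+v_2)\log(v_1+v_2)$. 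This product count $6\cdot v^{v_1}\cdot v^{v_2}$ is exactly the sharpening you anticipated would be needed; it replaces the $\binom{|P_{R'}(A)|}{h}\cdot h^{2h}$ factor of the generic lemma by the number of genuinely distinct intersections, which is what buys the constant $3$ in place of your $9$. To turn your write-up into a proof of the stated bound you would have to carry out that direct count rather than cite Lemma~\ref{lem:genboolcomp}.
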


\begin{proof}
  Let $v_1=VC(S_1)$ and $v_2=VC(S_2)$. Assume that a set $A\subseteq T_J$ is
  shattered by $J_C$, and $|A|=v$.  Consider the two cross-sections $A_1 =
  \{x\in \Tab_1 ~:~ (x,y)\in A\}$ and $A_2 = \{y\in \Tab_2 ~:~ (x,y)\in A\}$.
  Note that $|A_1|\leq v$ and $|A_2|\leq v$ and by~\ref{corol:SauerProj}
  $|P_{R_1} (A_1)|\leq g(v_1,v)\le v^{v_1}$ and $|P_{R_2}
  (A_2)|\leq g(v_2,v) \le v^{v_2}$. For each set $r \in
  P_{J_C}(A)$ (i.e., for each subset $r\subseteq A$, given that
  $P_{J_C}(A)=2^A$)  there is a pair $(r_1,r_2)$, $r_1\in R_1$, $r_2\in R_2$,
  and there is $\op$, such that $r = A \cap J^{\op}_{r_1,r_2}$. Each of such
  pair $(r_1,r_2)$ identifies a distinct pair $(r_1\cap A_1, r_2\cap A_2) \in
  P_{R_1}(A_1)\times P_{R_2}(A_2)$, therefore each element of
  $P_{R_1}(A_1)\times P_{R_2}(A_2)$ can be identified at most $6$ times, the
  number of possible values for ``$\op$''. 

  In particular, for a fixed ``$\op$'', an element of $P_{R_1}(A_1)\times
  P_{R_2}(A_2)$ can be identified at most once.  To see this, consider two
  different sets $s_1, s_2 \in P_{J_C}(A)$. Let the pairs $(a_1,a_2)$,
  $(b_1,b_2)$, $a_1,b_1\in R_1$, $a_2,b_2\in R_2$, be such that $s_1 = A \cap
  J^{\op}_{a_1,a_2}$ and $s_2 = A \cap J^{\op}_{b_1,b_2}$. Suppose that $a_1\cap
  A_1 = b_1\cap A_1$  ($\in P_{R_1}(A_1)$) and $ a_2 \cap A_2 = b_2 \cap A_2$
  ($\in P_{R_2}(A_2)$). The set $s_1$ can be seen as $\{(t_1,t_2) ~:~ t_1\in
  a_1\cap A_1, t_2\in a_2\cap A_2 \mbox{ s.t.  } t_1.C \op t_2.C\}$. Analogously
  the set $s_2$ can be seen as $\{(t_1,t_2) ~:~ t_1\in b_1\cap A_1, t_2\in
  b_2\cap A_2 \mbox{ s.t. } t_1.C \op t_2.C\}$. But given that $a_1\cap A_1 =
  b_1\cap A_1$ and $ a_2 \cap A_2 = b_2 \cap A_2$, this leads to $s_1=s_2$, a
  contradiction. Hence, a pair $(c_1,c_2)$, $c_1\in P_{R_1}(A_1)$, $c_2\in
  P_{R_2}(A_2)$ can only be identified at most $6$ times, one for each possible
  value of ``$\op$''.  

  Thus,$ |P_{J_C} (A)|\leq 6|P_{R_1} (A_1)|\cdot |P_{R_2} (A_2)|.$ $A$ could not
  be shattered if $|P_{J_C}(A)|< 2^v$, i.e., if
  \[
  |P_{J_C}(A)|\leq 6|P_{R_1} (A_1)|\cdot |P_{R_2} (A_2)|\leq
  6g(v_1,v)g(v_2,v)\leq 6v^{v_1+v_2} < 2^v.
  \]
  The rightmost inequality holds for $v> 3(v_1+v_2)\log(v_1+v_2)$.
\end{proof}



The above results can be generalized to any query plan represented as a tree
where the select operations are in the leaves and all internal nodes are join
operations. As we said earlier, such a tree exists for any query.

\begin{lemma}\label{lem:vcdimmuljoin}
  Consider the class $Q$ of queries that can be seen as combinations of select
  and joins on $u>2$ tables $\Tab_1,\dots,\Tab_u$. Let $S_i=(\Tab_i,R_i)$,
  $i=1,\dots,u$ be the range space associated with the select queries on the $u$
  tables. Let $v_i=VC(S_i)$. Let $m$ be the maximum number of columns in a table
  $\Tab_i$. We assume $m\le \sum_i v_i$.\footnote{The assumption $m\le \sum_i
  v_i$ is reasonable for any practical case.} Let $S_Q = (\Tab_1\times\dots\times
  \Tab_u, R_Q)$ be the range space associated with the class $Q$. The range set
  $R_Q$ is defined as follows. Let $\rho = (r_1,\dots,r_u)$, $r_i\in R_i$, and
  let $\omega$ be a sequence of
  $u-1$ join conditions representing a possible way to join the $u$ tables $\Tab_i$,
  using the operators $\{>,<,\ge,\le,=,\neq\}$. We define the range 
  \[
  J^\omega_{\rho} = \{(t_1,\dots,t_u) ~:~ t_i\in r_i, \mbox{ s.t. }
  (t_1,\dots,t_u) \mbox{ satisfies } \omega\}.\]
  $R_Q$ is the set of all possible $J^\omega_{\rho}$. Then,
  \[
  VC(S_Q)\leq 4u(\sum_i VC(S_i))\log(u\sum_i VC(S_i)).
  \]
\end{lemma}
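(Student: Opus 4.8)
The plan is to generalize the two-table argument of Lemma~\ref{lem:vcdimjoin} inductively along a left-deep (or balanced) join plan, but it is cleaner to do the counting in one shot rather than by repeated application of Lemma~\ref{lem:genboolcomp}, because each intermediate join node introduces its own freedom in the choice of join operator that we want to account for explicitly. So suppose $A\subseteq \Tab_1\times\dots\times\Tab_u$ is shattered by $R_Q$, with $|A|=v$. For each $i$ define the cross-section $A_i=\{x\in\Tab_i ~:~ x$ appears as the $i$-th coordinate of some tuple of $A\}$; then $|A_i|\le v$, and by Corollary~\ref{corol:SauerProj} we have $|P_{R_i}(A_i)|\le g(v_i,v)\le v^{v_i}$.

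Next I would bound $|P_{R_Q}(A)|$. Every range $r\in P_{R_Q}(A)$ has the form $A\cap J^\omega_\rho$ for some choice of $\rho=(r_1,\dots,r_u)$ and some sequence $\omega$ of $u-1$ join conditions. Exactly as in the two-table case, once we fix the join plan shape and the operator on each of the $u-1$ join edges, the set $A\cap J^\omega_\rho$ depends on $\rho$ only through the tuple of traces $(r_1\cap A_1,\dots,r_u\cap A_u)\in P_{R_1}(A_1)\times\dots\times P_{R_u}(A_u)$ — the same collision-freeness argument shows distinct elements of $P_{R_Q}(A)$ with the same $\omega$ force distinct trace-tuples. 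Hence, for fixed $\omega$, the number of ranges is at most $\prod_i |P_{R_i}(A_i)|\le \prod_i g(v_i,v)\le v^{\sum_i v_i}$. It remains to count the number of possible $\omega$: the number of distinct join-plan tree shapes on $u$ leaves is at most $u^{2u}$ (Catalan-type bound, as in Lemma~\ref{lem:genboolcomp}), each leaf can be one of $u$ tables (at most $u^u$ labelings, and actually we can fold the choice of which two tables meet at each internal node into this), and each of the $u-1$ internal nodes carries one of $6$ comparison operators, giving at most $6^{u-1}u^{2u}u^u$ choices; crudely this is at most $u^{cu}$ for an absolute constant $c$. Multiplying, $|P_{R_Q}(A)|\le u^{cu}\, v^{\sum_i v_i}$. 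For $A$ to be shattered we need $2^v\le |P_{R_Q}(A)|\le u^{cu} v^{\sum_i v_i}$; taking logarithms, $v\le cu\log u + (\sum_i v_i)\log v$, and a routine calculation (absorbing the $cu\log u$ term into the $\log v$ term using $m\le\sum_i v_i$ and $u\ge 3$) shows this fails once $v\ge 4u(\sum_i v_i)\log(u\sum_i v_i)$, which is the claimed bound.

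The main obstacle is getting the bookkeeping on $\omega$ right: I need a clean, provably sufficient over-count of the number of ways to specify a join plan together with its operators, and I must be careful that the ``fix $\omega$, then the range is determined by the trace-tuple'' step really holds when the plan has depth greater than one — i.e. that an intermediate join output, restricted to $A$, is a function of the restricted-to-$A_i$ inputs and the fixed operators, with no additional degrees of freedom leaking in. This follows because $J^\omega_\rho\cap A$ can be written, coordinatewise, purely in terms of which tuples of each $A_i$ survive and the fixed comparisons, so intersecting with $A$ at the top collapses all intermediate Cartesian-product blow-up; but stating this carefully is where the real work lies. The final step, verifying that the inequality $2^v> u^{cu}v^{\sum_i v_i}$ holds for $v\ge 4u(\sum_i v_i)\log(u\sum_i v_i)$, is the kind of monotonicity-plus-substitution estimate already used in the proofs of Lemmas~\ref{lem:genboolcomp} and~\ref{lem:vcdimjoin}, and the assumption $m\le\sum_i v_i$ is what lets the stated bound be written purely in terms of $u$ and $\sum_i VC(S_i)$ without a separate $m$-dependent term.
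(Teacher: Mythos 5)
Your overall strategy is exactly the paper's: take a shattered set $A$ of size $v$, pass to the cross-sections $A_i$, bound $|P_{R_i}(A_i)|\le g(v_i,v)\le v^{v_i}$ via Corollary~\ref{corol:SauerProj}, show that for a fixed $\omega$ the trace $A\cap J^\omega_\rho$ is determined by the tuple $(r_1\cap A_1,\dots,r_u\cap A_u)$, multiply by the number of possible $\omega$, and derive a contradiction with $|P_{R_Q}(A)|=2^v$. The collision-freeness step you worry about is actually immediate here and requires no reasoning about plan depth: by the definition of $R_Q$ in the lemma statement, $J^\omega_\rho$ is a flat subset of the $u$-fold Cartesian product cut out by a conjunction of $u-1$ pairwise conditions on the coordinates, so if $a_i\cap A_i=b_i\cap A_i$ for all $i$ then $A\cap J^\omega_{\rho_a}=A\cap J^\omega_{\rho_b}$ directly. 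The paper never decomposes $\omega$ into a tree at all.

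The genuine gap is in your count of the number of possible $\omega$. You count tree shapes, leaf labelings, and the $6^{u-1}$ operator choices, arriving at roughly $u^{cu}$, but a join condition is not determined by which two tables it relates: it must also specify \emph{which column of each table} is compared, and each table has up to $m$ columns. The paper counts this by choosing, for each of the $u-1$ conditions, a pair of columns out of the $um$ columns available, giving $\ell\le\binom{\binom{um}{2}}{u-1}\cdot 6^{u-1}\le(mu)^{2u}$ possible $\omega$; the $m$ cannot be dropped. This is precisely where the hypothesis $m\le\sum_i v_i$ enters: it lets you bound $(mu)^{2u}\le\bigl(u\sum_i v_i\bigr)^{2u}$ so that the final inequality $2^v>(mu)^{2u}v^{\sum_i v_i}$ holds for $v>4u\bigl(\sum_i v_i\bigr)\log\bigl(u\sum_i v_i\bigr)$ with no separate $m$-dependent term. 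In your write-up the hypothesis is invoked only to absorb a $cu\log u$ term, which is not where it is needed; with your $m$-free count of $\omega$ the hypothesis would be vacuous, which is a sign the count is missing a factor. The fix is mechanical --- replace your $\omega$-count with the column-pair count above --- and the rest of your argument then matches the paper's proof.
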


Note that this Lemma is not just an extension of Lemma~\ref{lem:vcdimjoin}
to join queries between multicolumns tables. Instead, it is an extension to
queries containing multiple joins (possibly between multicolumns tables).

\begin{proof}
Assume that a set $A\subseteq T_J$ is shattered by $R_Q$, and $|A|=v$. Consider
the cross-sections $A_i=\{x\in\Tab_i ~:~
(y_1,\dots,y_{i-1},x,y_{i+1},\dots,y_u)\in A\}, 1\le i\le u$.  Note that
$|A_i|\leq v$ and by~\ref{corol:SauerProj} $|P_{R_i} (A_i)|\leq g(v_i,v)
 \le v^{v_i}$. For each set $r \in P_{J_C}(A)$
(i.e., for each subset $r\subseteq A$, given that $P_{J_C}(A)=2^A$)  there is a
sequence $\rho=(r_1,\dots,r_u)$, $r_i\in R_i$, and there is an $\omega$, such
that $r = A \cap J^{\omega}_{\rho}$. Each sequence $\rho$ identifies a distinct
sequence $(r_1\cap A_1, r_2\cap A_2, \dots, r_u\cap A_u) \in
P_{R_1}(A_1)\times\dots\times P_{R_u}(A_u)$, therefore each element of
$P_{R_1}(A_1)\times \dots\times P_{R_u}(A_u)$ can be identified at most
$6^{u-1}$ times, one for each different $\omega$.

In particular, for a fixed $\omega$, an element of $P_{R_1}(A_1)\times
\dots\times P_{R_u}(A_u)$ can be identified at most once.
To see this, consider two different sets $s_1, s_2 \in
P_{J_C}(A)$. Let the vectors $\rho_a=(a_1,\dots,a_u)$,
$\rho_b=(b_1,\dots,b_u)$, $a_i,b_i\in R_i$, be such that $s_1 = A \cap
J^{\omega}_{\rho_a}$ and $s_2 = A \cap
J^{\omega}_{\rho_b}$. Suppose that $a_i\cap A_i = b_i\cap A_i$  ($\in
P_{R_i}(A_i)$). The set $s_1$ can be seen as $\{(t_1,\dots,t_u) ~:~ t_i\in a_i\cap
A_i,\mbox{ s.t. } (t_1,\dots,t_u) \mbox{ satisfies } \omega\}$. Analogously the
set $s_2$ can be seen as $\{(t_1,\dots, t_u) ~:~ t_i\in b_i\cap A_i, \mbox{ s.t.
} (t_1,\dots,t_u) \mbox{ satisfies } \omega\}$. But given that $a_i\cap A_i =
b_i\cap A_i$, this leads to $s_1=s_2$, a contradiction. Hence, a vector 
$(c_1,\dots,c_u)$, $c_i\in P_{R_i}(A_i)$, can only be identified at most $\ell$
times, one for each different $\omega$. For each of the $u-1$ join conditions
composing $\omega$ we need to choose a pair $(\Tab_1.A,\Tab_1.B)$ expressing the
columns along which the tuples should be joined. There are at most $g=\binom{um}{2}$
such pairs (some of them cannot actually be chosen, e.g. those of the type
$(\Tab_1.A, \Tab_1.B)$). There are then $\binom{g}{u-1}$ ways of choosing these
$u-1$ pairs. For each choice of $u-1$ pairs, there are $6^{(u-1)}$ ways of
choosing the operators in the join conditions ($6$ choices for $\op$ for each
pair). We have
\[
\ell\le \binom{\binom{um}{2}}{u-1}\cdot6^{(u-1)}\le (mu)^{2u}.
\]
Thus, $|P_{J_C} (A)|\leq \ell|P_{R_1} (A_1)|\cdot \dots \cdot|P_{R_u} (A_u)|$.
$A$ could not be shattered if $|P_{J_C}(A)|< 2^v$, i.e. if
\[
|P_{J_C} (A)| \leq \ell\cdot|P_{R_1} (A_1)|\cdot \dots\cdot |P_{R_u} (A_u)|\leq
\ell\cdot g(v_1,v)g(v_2,v)\dots g(v_u,v) \leq \leq (mu)^{2u} \cdot
v^{v_1+\dots+v_u}< 2^v.
\]
The rightmost inequality holds for $v> 4u\left(\sum_i v_i\right)\log (u\sum_i v_i)$.
\end{proof}

\subsection{General Queries}\label{sec:vcdimgenqueries}
Combining the above results we prove:
\begin{thm}\label{thm:vcdimgenqueries}
Consider a class $Q_{u,m,b}$ of all queries with up to $u-1$ join and $u$ select
operations, where each select operation involves no more than $m$ columns and $b$
Boolean operations, then 
\[
VC(Q_{u,m,b}) \leq
12u^2(m+1)b\log((m+1)b)\log(3u^2(m+1)b\log((m+1)b)).\]
\end{thm}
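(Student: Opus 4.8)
The plan is to compose the two main building blocks already established: Lemma~\ref{lem:vcdimselgen}, which bounds the VC-dimension of a single multi-column select query with $b$ Boolean clauses, and Lemma~\ref{lem:vcdimmuljoin}, which bounds the VC-dimension of a class of multi-join queries in terms of the VC-dimensions $v_i$ of the underlying select range spaces on each of the $u$ tables. Concretely, for each table $\Tab_i$ appearing in a query from $Q_{u,m,b}$, the relevant family of select outputs is $\Sigma^{b*}_{\Tab_i}$, so Lemma~\ref{lem:vcdimselgen} gives $v_i = VC(S_i) \le 3(m+1)b\log((m+1)b)$. I would then feed these $v_i$ into Lemma~\ref{lem:vcdimmuljoin}.

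First I would check that the hypotheses of Lemma~\ref{lem:vcdimmuljoin} are met: we need each $v_i \ge 2$ (true since $(m+1)b \ge 2$ forces $3(m+1)b\log((m+1)b) \ge 2$) and the technical assumption $m \le \sum_i v_i$, which holds because each single $v_i$ already dominates $m$. Then $\sum_{i=1}^u v_i \le 3u(m+1)b\log((m+1)b)$, and plugging into the conclusion of Lemma~\ref{lem:vcdimmuljoin},
\[
VC(S_Q) \le 4u\Bigl(\textstyle\sum_i v_i\Bigr)\log\Bigl(u\textstyle\sum_i v_i\Bigr)
\le 4u \cdot 3u(m+1)b\log((m+1)b) \cdot \log\bigl(3u^2(m+1)b\log((m+1)b)\bigr).
\]
The leading constant $4 \cdot 3 = 12$ matches the theorem statement, so the remaining work is purely bookkeeping inside the logarithm: substituting $\sum_i v_i \le 3u(m+1)b\log((m+1)b)$ into $\log(u\sum_i v_i)$ yields exactly the $\log(3u^2(m+1)b\log((m+1)b))$ factor claimed.

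One subtlety I would address explicitly is that $Q_{u,m,b}$ ranges over \emph{all} ways of choosing which tables participate, which columns are joined, and which Boolean structure each select predicate has — but this is already absorbed by Lemma~\ref{lem:vcdimmuljoin}, whose range set $R_Q$ quantifies over all join-condition sequences $\omega$ and all tuples $\rho$ of select outputs, and by Lemma~\ref{lem:vcdimselgen}, whose family $\Sigma^{b*}_{\Tab}$ already contains every Boolean combination of $b$ clauses over any of the $m$ columns. A query with fewer than $u$ tables or fewer than $b$ clauses is a special case obtained by taking trivial (full-table) selects or padding, so no generality is lost by analyzing the worst case $u$ tables each with a $b$-clause predicate.

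The only real obstacle is making sure the monotonicity/substitution steps inside the nested logarithm are valid — i.e., that replacing $\sum_i v_i$ by its upper bound $3u(m+1)b\log((m+1)b)$ is legitimate in both the multiplicative factor and the $\log$, which it is since $x \mapsto x\log(ux)$ is increasing for $x \ge 1$. This is routine, so the proof is short: invoke Lemma~\ref{lem:vcdimselgen} for each table, sum, invoke Lemma~\ref{lem:vcdimmuljoin}, and simplify.
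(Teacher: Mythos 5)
Your proposal is correct and matches the paper's intended argument exactly: the paper offers no written-out proof beyond ``Combining the above results we prove,'' and the combination it has in mind is precisely yours --- bound each $v_i$ by $3(m+1)b\log((m+1)b)$ via Lemma~\ref{lem:vcdimselgen}, sum over the $u$ tables, and substitute into the $4u(\sum_i v_i)\log(u\sum_i v_i)$ bound of Lemma~\ref{lem:vcdimmuljoin}, which reproduces the stated constant $12$ and the nested logarithm. Your additional checks (the hypotheses $v_i\ge 2$ and $m\le\sum_i v_i$, and the monotonicity of $x\mapsto x\log(ux)$) are more care than the paper itself takes.
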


Note that Theorem~\ref{thm:vcdimgenqueries} gives an upper bound to the
VC-dimension. Our experiments suggest that in most cases the VC-dimension and
the corresponding minimum sample size are even smaller.

\section{Estimating Query Selectivity}\label{sec:applications}
We applying the theoretical result on the VC-dimension of queries 
to constructing a concrete
algorithm for selectivity estimation and query plan optimization.

\subsection{The general scheme}
Our goal is to apply Def.~\ref{defn:eapprox} and Thm.~\ref{thm:eapprox} to
compute an estimate of the selectivity of SQL queries. Let $Q_{u,m,b}$ be a
class of queries as in Theorem~\ref{thm:vcdimgenqueries}.
The class $Q_{u,m,b}$ defines a range space $S=(X,R)$ such that $X$ is the
Cartesian product of the tables involved in executing queries in $Q_{u,m,b}$,
and $R$ is the family of all output sets of queries in $Q_{u,m,b}$. 
Let $\Sam$ be an $\varepsilon$-approximation of $X$ and
let $r$ be the output set
of a query $q\in Q_{u,m,b}$ when executed on the original dataset, then $X\cap
r=r$ and $r\cap \Sam$ is the output set when the query is executed on the sample
(see details below). Thus, by Def.~\ref{defn:eapprox},
\[
\left|\frac{|X\cap r|}{|X|} - \frac{|\Sam\cap r|}{|\Sam|}\right|=
|\sigma_\Db(q)-\sigma_\Sam(q)| \le\varepsilon,
\]
i.e., the selectivity of running a query $q\in Q_{u,m,b}$ on an $\varepsilon$-approximation of $X$ 
is
within $\varepsilon$ of the selectivity of $q$ when executed on the original set of tables.
Note that for any execution plan of a query $q\in Q_{u,m,b}$, all the queries
that correspond to subtrees rooted at internal nodes of the plan are queries in
$Q_{u,m,b}$. Thus, by running query $q$ on an $\varepsilon$-approximation of $X$ we obtain accurate
estimates for the selectivity of all the subqueries defined by its execution
plan. Corresponding results are obtained by using a relative
  $(p,\varepsilon)$-approximation for $X$.
  
\subsection{Building and using the sample representation}\label{sec:building}
We apply Thm.~\ref{thm:eapprox} to probabilistically construct an $\varepsilon$-approximation of $X$.
A technical difficulty in algorithmic application of the theorem is that it is proven  for
a uniform sample of the Cartesian product
of all the tables in the database, while
in practice it is more efficient to maintain the table structure of the
original database in the sample. It is easier to sample each table
independently, and to run the query on a sample that consists of subsets of the
original tables rather than re-writing the query to run on a Cartesian product
of tuples. However, the Cartesian product of independent uniform samples of
tables is not a uniform sample of the Cartesian product of the
tables~\citep{ChaudhuriMN99}. We developed the following procedure to circumvent
this problem. Assume that we need a uniform sample of size $t$ from
$\mathbf{D}$, which is the Cartesian product of $\ell$ tables
$\Tab_1,\dotsc,\Tab_\ell$. We then sample $t$ tuples uniformly at random from
each table $\Tab_i$, to form a sample table $\Sam_i$. We add an attribute
$sampleindex$ to each $\Sam_i$ and we set the value in the added attribute for each tuple in
$\Sam_i$ to a unique value in $[1,t]$. Now, each sample table will contain $t$ tuples,
each tuple with a different index value in $[1,t]$. Given an index value
$i\in[1,t]$, consider the set of tuples $X_i=\{x_1,\dotsc,x_\ell\}$, $x_j\in\Sam_i$
such that $x_1.sampleindex = x_2.sampleindex =\dotsb=x_\ell.sampleindex=i$. $X_i$
can be seen as a tuple sampled from $\mathbf{D}$, and the set of all $X_i$,
$i\in[1,t]$ is a uniform random sample of size $t$ from $\mathbf{D}$. We run
queries on the sample tables, but in order to estimate the selectivity of a join
operation we count a tuple $Y$ in the result only if the set of tuples composing
$Y$ is a subset of $X_i$ for some $i\in[1,t]$. This is easily done by scanning
the results and checking the values in the $sampleindex$ columns (see Algorithms~\ref{alg:CreateSam} and~\ref{alg:ComputeSel}).

\begin{algorithm}[ht]
\SetKwInOut{Input}{input}
\SetKwInOut{Output}{output}
\SetKwFunction{drawRandomTuple}{drawRandomTuple}
\DontPrintSemicolon
\Input{sample size $s$, tables $\Tab_1,...,\Tab_k$.}
\Output{sample tables $\Sam_1,...,\Sam_k$ with $t$ tuples each.}
\For{$j\leftarrow 1$ \KwTo $k$}{
$\Sam_j\leftarrow\emptyset$\;
}
\For{$i\leftarrow 1$ \KwTo $s$}{
\For{$j\leftarrow 1$ \KwTo $k$}{
$t\leftarrow$ \drawRandomTuple{$\Tab_j$}\;
$s.sampleindex_j\leftarrow i$\;
$\Sam_j \leftarrow \Sam_j \cup \{t\}$\;
}
}
\caption{$\mathtt{CreateSample}(s,(T_1,\dots,T_k))$}\label{alg:CreateSam}
\end{algorithm}

\begin{algorithm}[ht]
\SetKwInOut{Input}{input}
\SetKwInOut{Output}{output}
\SetKwFunction{executeOperation}{executeOperation}
\Input{database operation $op$, sample database $S=(\Sam_1,\dots,\Sam_k)$ of
size $s$.}
\Output{the selectivity of $op$.}
$O_{op} \leftarrow \executeOperation{S, op}$\;
$(\ell_1,\dots,\ell_j)\leftarrow$ indexes of the sample tables involved in $op$ \;
$i\leftarrow 0$\;
\For{ $tuple\in O_{op}$}{
\lIf{$tuple.sampleindex_{\ell_1}=tuple.sampleindex_{\ell_2}=\dots=tuple.sampleindex_{\ell_j}$}{
$i\leftarrow i+1$\;
}
}
$selectivity \leftarrow i/s$\;
\caption{$\mathtt{ComputeSelectivity}(\Sam,op)$}\label{alg:ComputeSel}
\end{algorithm}

\begin{lemma}\label{lem:ComputeSel}
The $\mathtt{ComputeSelectivity}$ procedure (in Alg.~\ref{alg:ComputeSel})
executes a query on the Cartesian product of independent random samples of the
tables but outputs the selectivity that corresponds to executing the query on a
random sample of the Cartesian product of the original tables. 
\end{lemma}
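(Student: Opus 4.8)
The plan is to split the statement into two independent claims and verify each separately. \textbf{Claim A:} the ``aligned'' tuples implicitly produced by $\mathtt{CreateSample}$ constitute a genuine uniform random sample of the Cartesian product $\mathbf{D}=\Tab_1\times\dots\times\Tab_k$. \textbf{Claim B:} the $sampleindex$-agreement filter inside $\mathtt{ComputeSelectivity}$ computes exactly the output of the operation on that aligned sample, so the ratio $i/s$ it returns equals the selectivity of the operation evaluated on a uniform sample of the Cartesian product of the relevant original tables.

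For Claim A, I would fix an index $a\in[1,s]$ and let $X_a=(x_1,\dots,x_k)$, where $x_j$ is the unique tuple of $\Sam_j$ whose $sampleindex_j$ equals $a$ (uniqueness holds because $\mathtt{CreateSample}$ assigns each value in $[1,s]$ to exactly one tuple of each $\Sam_j$). Since $x_j$ is drawn uniformly at random from $\Tab_j$ and the draws across $j$ and across $a$ are mutually independent, $X_a$ is uniform over $\mathbf{D}$ and $X_1,\dots,X_s$ are independent; hence $\{X_1,\dots,X_s\}$ is a uniform random sample of size $s$ from $\mathbf{D}$. This is exactly the property that a naive Cartesian product of independent table samples fails to have~\citep{ChaudhuriMN99}, and it is what makes Thm.~\ref{thm:eapprox} applicable.

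For Claim B, consider an operation $op$ touching the tables with indices $\ell_1,\dots,\ell_j$, with output $r$ (a subset of $\Tab_{\ell_1}\times\dots\times\Tab_{\ell_j}$) on the original database, and let $Z_a$ be the restriction of $X_a$ to the coordinates $\ell_1,\dots,\ell_j$; by the argument above the $Z_a$ are independent and uniform over $\Tab_{\ell_1}\times\dots\times\Tab_{\ell_j}$. I would then establish a bijection: a tuple $Y=(y_{\ell_1},\dots,y_{\ell_j})$ lies in $O_{op}$ and survives the $sampleindex$-agreement test of Alg.~\ref{alg:ComputeSel} if and only if $Y=Z_a$ for some $a$ with $Z_a\in r$. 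The forward direction uses that a common index value $a$ forces each $y_{\ell_b}$ to be the unique index-$a$ tuple of $\Sam_{\ell_b}$, and that membership in $O_{op}$ means $Y$ satisfies the selection/join predicate of $op$, which depends only on the attribute values of $Y$ and is therefore equivalent to $Z_a\in r$; the reverse direction is immediate, since $Z_a$ lies in $\Sam_{\ell_1}\times\dots\times\Sam_{\ell_j}$, satisfies the predicate, and has all components sharing index $a$. Hence the counter $i$ equals $|\{Z_1,\dots,Z_s\}\cap r|$, and the returned value $i/s$ is precisely the empirical selectivity of $op$ on the uniform sample $\{Z_1,\dots,Z_s\}$ of the Cartesian product of the original tables involved in $op$.

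I expect the only delicate part to be the bookkeeping in this last correspondence: checking that restricting attention to an operation that involves only some of the tables, and that may sit at an internal node of a query plan rather than at the root, does not disturb the argument, and that evaluating the predicate on sample-table tuples versus on tuples of $\mathbf{D}$ produces the same membership decision. A minor remark worth one sentence is that $\mathtt{CreateSample}$ draws with replacement, so $\{X_1,\dots,X_s\}$ should be read as a sample drawn with replacement (a multiset); this matches the sampling model under which Thm.~\ref{thm:eapprox} is invoked and does not affect the identity $i/s=|\{Z_a\}\cap r|/s$.
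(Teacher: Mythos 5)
Your proposal is correct and follows essentially the same route as the paper's (much terser) proof: the index alignment singles out $s$ elements of the Cartesian product of the sample tables that form a uniform random sample of the Cartesian product of the original tables, and the $sampleindex$-agreement filter restricts the count to exactly those elements. Your explicit verification of uniformity/independence and the with-replacement remark are details the paper leaves implicit, but the underlying argument is identical.
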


\begin{proof}
The $CreateSample$ procedure chooses from each table a random sample of
$t$ tuples and adds to each sampled tuple an index in $[1,t]$. Each sample table has
exactly one tuple with each index value, and the Cartesian product of the sample
tables has exactly one element that is a concatenation of tuples, all with the
same index $i$ in their tables. Restricting the selectivity computation to these
$t$ elements (as in $ComputeSelectivity$) gives the result. 
\end{proof}

Note that our method circumvent the major difficulty pointed out
by Chaudhuri~et~al.~\citeyearpar{ChaudhuriMN99}. They also proved that, in general, it
is impossible to predict sample sizes for given two tables such that the join of
the samples of two tables will result in a sample of a required size out of the
join of the two tables. Our method does not require a sample of a given size
from the result of a join. The VC-dimension sampling technique requires only a
sample of a given size from the Cartesian product of the tables, which is
guaranteed by the above procedure.

Identifying the optimal query plan during query optimization may
require executing several candidate query plans on the sample. A standard
bottom-up candidate plan generation allows us to execute sub-plans once, store
their results and reuse them multiple times as they will be common to many
candidate plans. While the overhead of this execution-based selectivity
estimation approach will still likely be higher than that of pre-computation
based techniques (e.g., histograms),  the reduced execution times of highly
optimized plans enabled by better estimates, especially for complex and
long-running queries, will more than compensate for this overhead.   
Thus, storing intermediate results that are common to several executions will speed up
the total execution time on the sample. The significant improvement in the
selectivity estimates in complex queries well compensates for the extra work in
computing the selectivity estimates.

\section{Experiments}\label{sec:experiments}
This section presents the results of the experiments we run to validate our
theoretical results and to compare our selectivity estimation
method with standard techniques implemented in PostgreSQL and in
Microsoft SQL Server.

\paragraph{Goals.} The first goal of the experiments is to evaluate the practical
usefulness of our theoretical results. To assess this, we run queries on a large
database and on random samples of it of different sizes. We use the selectivity
of the each query in the random samples as an estimator for the selectivity in
the large database with the adjustments for join operations, as
described in the previous Section. We compute the error between the
estimate and the actual selectivity to show that the thesis of
Thm.~\ref{thm:eapprox} is indeed valid in practice. The use of a large number
of queries and of a variety of parameters allows us to evaluate the error
rate as a function of the sample size. We then compare our
method with the commonly used selectivity estimation based  on precomputed
histograms (briefly described in Sect.~\ref{sec:selhist}). We use histograms
with a different number of buckets to show that, no matter how fine-grained the
histograms might be, as soon as the inter-column and intra-bucket assumptions
are no longer satisfied, our approach gives better selectivity
estimates.

\subsection{Selectivity Estimation with Histograms}\label{sec:selhist}
In many modern database systems, the query optimizer relies on
histograms for computing data distribution statistics to help determine the most
efficient query plans. In particular, PostgreSQL 
uses one-dimensional equi-depth (i.e., equal frequency buckets) histograms and a
list of the most common values (MCV) for each column (of a database table) to
compute optimizer statistics. The MCV information stores the most
frequent $N$ items (by default $N=100$) and their frequency for each column. The
histograms (by default with $100$ bins) are built for the values not stored in
the MCV list. The selectivity of a constraint $A=x$, where $A$ is a
column and $x$ is a value is computed from the MCV list if $x$ is in the MCV
list or from the histogram bin that contains $x$ if $x$ is not in the MCV list.
The selectivity of a range constraint such as $A<x$ is computed with information from both the MCV
list and the histogram, i.e., the
frequencies of the most common values less than x and the frequency estimate for
$A<x$ from the histogram will be added to obtain the selectivity.

In PostgreSQL, the histograms and the MCV lists for the
columns of a table are built using a random sample of the tuples of the table. The
histograms and the MCV list for all columns of a table are based on the same
sample tuples (and are therefore correlated).  The sample size is computed for
each column using a formula based on the table size, histogram size, and a
target error probability developed by Chaudhuri~et~al.~\citeyearpar{ChaudhuriMN98}  and the largest sample size required by
the columns of a table is used to set the sample size of the table. 

Finally, the join selectivity of multiple constraints are computed using
the attribute independence assumption: e.g., selectivities are added in case of an
OR operator and multiplied for an AND operator.  Therefore, large selectivity
estimation errors are possible for complex queries and correlated inputs. 

\subsection{Setup}

\paragraph{Original tables.} The tables in our large database were randomly
generated and contain 20 million tuples each. There is a distinction between tables
used for running selection queries and tables used for running join (and
selection) queries. For tables on which we run selection queries only, the
distributions of values in the columns fall in two different categories:  
\begin{itemize}
  \item {\bf Uniform and Independent:} The values in the columns are chosen
    uniformly and independently at random from a fixed domain (the integer
    interval $[0,200000]$, the same for all columns). Each column is treated
    independently from the others. 
  \item {\bf Correlated:} Two columns of the tables contain values following a
    multivariate normal distribution with mean $M=\mu\mathbb{I}_{2,2}$ and a
    non-identity covariance matrix $\Sigma$ (i.e., the values in the two
    different columns are correlated). 
\end{itemize}
The tables for join queries should be considered in pairs $(A,B)$ (i.e., the
join happens along a common column $C$ of tables $A$ and $B$). The values in the
columns are chosen uniformly and independently at random from a fixed domain (the integer interval
$[0,200000]$, the same for all columns). Each column is treated independently
from the others. 

\paragraph{Sample tables.} We sampled tuples from the large tables uniformly,
independently, and with replacement, to build the sample tables. For the samples
of the tables used to run join queries, we drew random tuples uniformly at
random from the base tables independently and added a column $sampleindex$ to
each tuple such that each tuple drawn from the same base table has a different
value in the additional column and with tuples from different tables forming an
element of the sample (of the Cartesian product of the base tables) if they have
the same value in this additional column, as described in
Sect.~\ref{sec:building}.

For each table in the original database we create many sample tables of
different sizes. The sizes are either fixed arbitrarily or computed
using~\eqref{eq:eapprox} from Thm.~\ref{thm:eapprox}. The arbitrarily sized
sample tables contain between $10000$ and $1.5$ million tuples. To compute the
VC-dimension-dependent sample size, we fixed $\varepsilon=0.05$,
$\delta=0.05$, and $c=0.5$. The parameter $d$ was set to the best bound to the
VC-dimension of the range space of the queries we were running, as obtained from
our theoretical results. If we let $m$ be the number of columns involved in the
selection predicate of the queries and $b$ be the number of Boolean clauses in
the predicate, we have that $d$ depends directly on $m$ and $b$, as does the
sample size $s$ through \eqref{eq:eapprox} in Thm.~\ref{thm:eapprox}. For
selection queries, we used $m=1,2$ and $b=1,2,3,5,8$, with the addition of the
combination $m=5$, $b=5$. We run experiments on join queries only for some
combinations of $m$ and $b$ (i.e. for $m=1$ and $b=1,2,5,8$) due to the large
size of the resulting sample tables. Table~\ref{tab:samplesize} shows the sample
sizes, as number of tuples, for the combinations of parameters we used in our
experiments.

\begin{table}[ht]
  \centering
  \begin{tabular}{|c|c|cc|cc|}
    \cline{3-6}
    \multicolumn{2}{c}{$ $} &
    \multicolumn{2}{|c}{Select} &
    \multicolumn{2}{|c|}{Join} \\
    \hline
    $m$ & $b$ & VC-dim & Sample size & VC-dim & Sample size \\
    \hline
    \multirow{5}{*}{1} & 1 & 2 & 1000 & 4 & 1400\\
     & 2 & 4 & 1400 & 16 & 3800\\
     & 3 & 6 & 2800 & 36 & 7800 \\
     & 5 & 10 & 2600 & 100 & 20600\\
     & 8 & 16 & 3800 & 256 & 51800\\
    \hline
    \multirow{4}{*}{2} & 2 & 31 & 6800  & & \\
     & 3 & 57 & 12000 & & \\
     & 5 & 117 & 24000 & & \\
     & 8 & 220 & 44600 & & \\
    \hline
    5 & 5 & 294 & 59400 & & \\
    \hline
  \end{tabular}
  \caption{Sample Sizes}
  \label{tab:samplesize}
\end{table}
 

\paragraph{Histograms.} We built histograms with a different number of buckets,
ranging from $100$ to $10000$. Due to limitations in PostgreSQL, incrementing
the number of buckets in the histograms also increments the number of values
stored in the MCV list. Even if this fact should have a positive influence on
the quality of the selectivity estimates obtained from the histograms, our
results show that the impact is minimal, especially when the inter-column
independence and the intra-bucket uniformity assumptions are not satisfied.
For SQL Server, we built the standard single-column histograms and
computed the multi-column statistics which should help obtaining better
estimations when the values along the columns are correlated.

\paragraph{Queries.} 
For each combination of the parameters $m$ and $b$ and each large table (or pair
of large tables, in the case of join) we created $100$ queries, with selection
predicates involving $m$ columns and $b$ Boolean clauses. The parameters in each
clause, the range quantifiers, and the Boolean operators connecting the
different clauses were chosen uniformly at random to ensure a wide coverage of
possible queries.

\subsection{Results}
\paragraph{Selection Queries.} The first result of our experiments is that, for
all the queries we run, on all the sample tables, the estimate of the
selectivity computed using our method was within
$\varepsilon$ ($=0.05$) from the real selectivity. The same was not true for the
selectivity computed by the histograms. As an example, in the case of $m=2$,
$b=5$ and uniform independent values in the columns, the default PostgreSQL
histograms predicted a selectivity more than $\varepsilon$ off from the real
selectivity for 30 out of 100 queries.  Nevertheless, in some of cases the
histograms predicted a selectivity closer to the actual one than what our method
predicted. This is especially true when the histogram independence assumption
holds (e.g., for $m=2$, $b=5$ the default histograms gave a better prediction
than our technique in 11 out of 100 cases). Similar situations also arise for
SQLServer.

\begin{figure}[ht]
  \centering
  \includegraphics[scale=0.35]{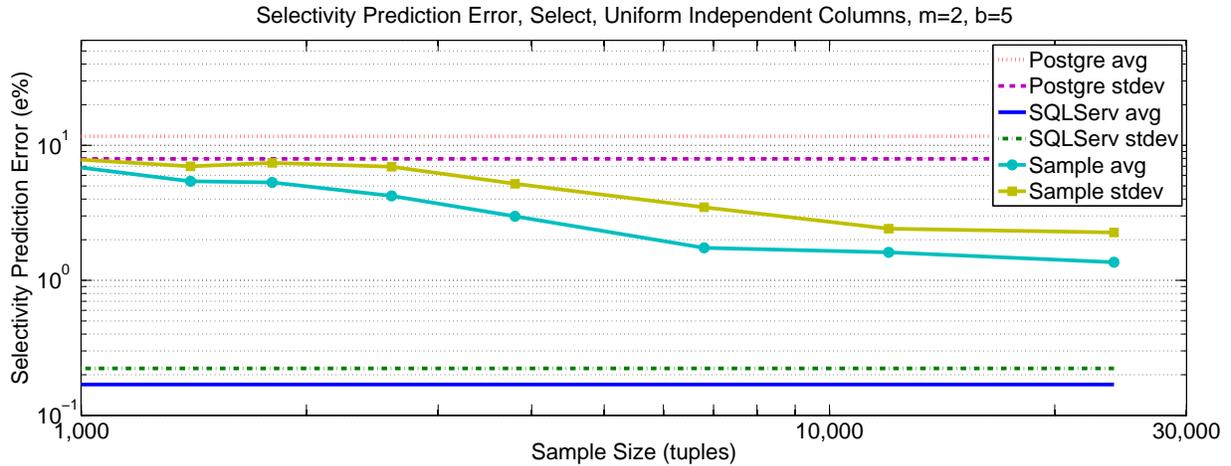}
  \caption{Select -- Uniform Independent Columns -- $m=2$, $b=5$}
  \label{fig:T_k5_u200k_unif_k2_b5_errperc}
\end{figure}

Since the selectivity estimated by the our method was always within $\varepsilon$
from the actual, we report the actual percent error, i.e. the quantity
$e_\%=\frac{100|p(\sigma_q)-\sigma_\Db(q)|}{\sigma_\Db(q)}$ where $p(\sigma_q)$
is the predicted selectivity. We analyze the average and the standard deviation
of this quantity on a set of queries and the evolution of these measures as
the sample size increases. We can see from
Fig.~\ref{fig:T_k5_u200k_unif_k2_b5_errperc}
and~\ref{fig:T_k2_correl_k2_b8_errperc} that both the average and the standard
deviation of the percentage error of the prediction obtained with our method
decrease as the sample size grows (the rightmost plotted
sample size is the one from Table~\ref{tab:samplesize}, i.e., the one computed 
in~Thm.\ref{thm:eapprox}. More interesting is the comparison in those figures between the performance of the
histograms and the performance of our techniques in predicting selectivities. When
the assumptions of the histograms hold, as is the case for the data plotted in
Fig.~\ref{fig:T_k5_u200k_unif_k2_b5_errperc}, the predictions obtained from the
histograms are good. 

\begin{figure}[ht]
  \centering
  \includegraphics[scale=0.35]{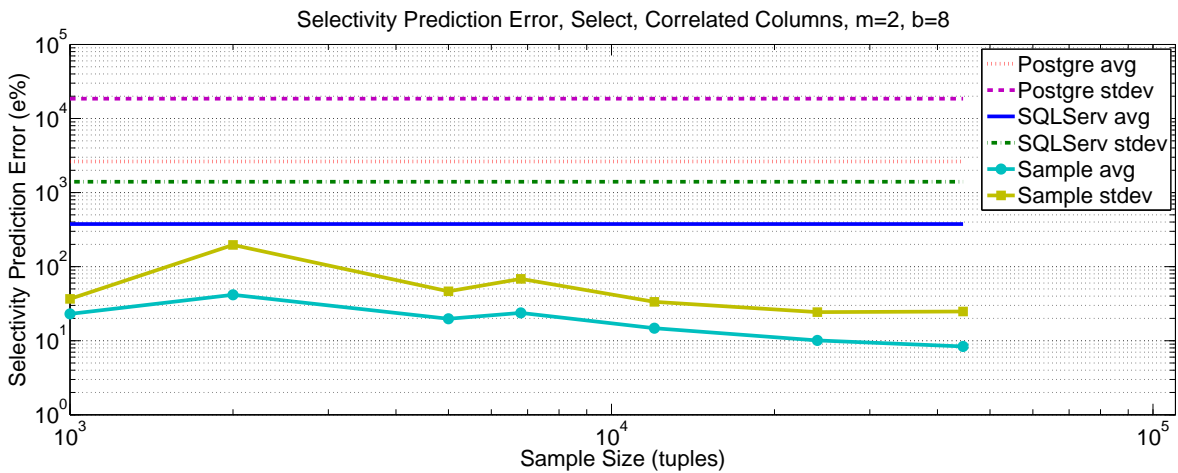}
  \caption{Select -- Correlated Columns -- $m=2$, $b=8$}
  \label{fig:T_k2_correl_k2_b8_errperc}
\end{figure}

But as soon as the data are
correlated (Fig.~\ref{fig:T_k2_correl_k2_b8_errperc}), our sampling method gives better
predictions than the histograms even at the smallest sample sizes and keeps
improving as the sample grows larger. It is also interesting to observe how the
standard deviation of the prediction error is much smaller for our method than
for the histograms, suggesting a much higher consistency in the quality of the
predictions. In Fig.~\ref{fig:T_k2_correl_k2_b8_errperc} we do not show multiple
curves for the different PostgreSQL histograms because increasing the number of
buckets had very marginal impact on the quality of the estimates, sometimes even
in the negative sense (i.e., an histogram with more buckets gave worse
predictions than an histogram with less buckets), a fact that can be explained
with the variance introduced by the sampling process used to create the
histograms. For the same reason we do not plot multiple lines for the
prediction obtained from the multi-columns and single-column statistics of SQL
Server: even when the multi-column statistics were supposed to help, as in the
case of correlated data, the obtained prediction were not much different from
the ones obtained from the single-column histograms.

\paragraph{Join Queries.} The strength of our method compared to histograms is
even more evident when we run join queries, even when the histograms independent assumptions are
satisfied. In our experiments, the predictions obtained using our technique were
always within $\varepsilon$ from the real values, even at the smallest sample
sizes, but the same was not true for histograms. For example, in the case of
$m=1$ and $b=5$, $135$ out of $300$ predictions from the histograms were more
than $\varepsilon$ off from the real selectivities.
Figure~\ref{fig:join_k1_b1_errperc} shows the comparison between the average and
the standard deviation of the percentage error, defined in the previous
paragraph, for the histograms and our method. The numbers include predictions
for the selection operations at the leaves of the query tree.

\begin{figure}[ht]
  \centering
  \includegraphics[scale=0.35]{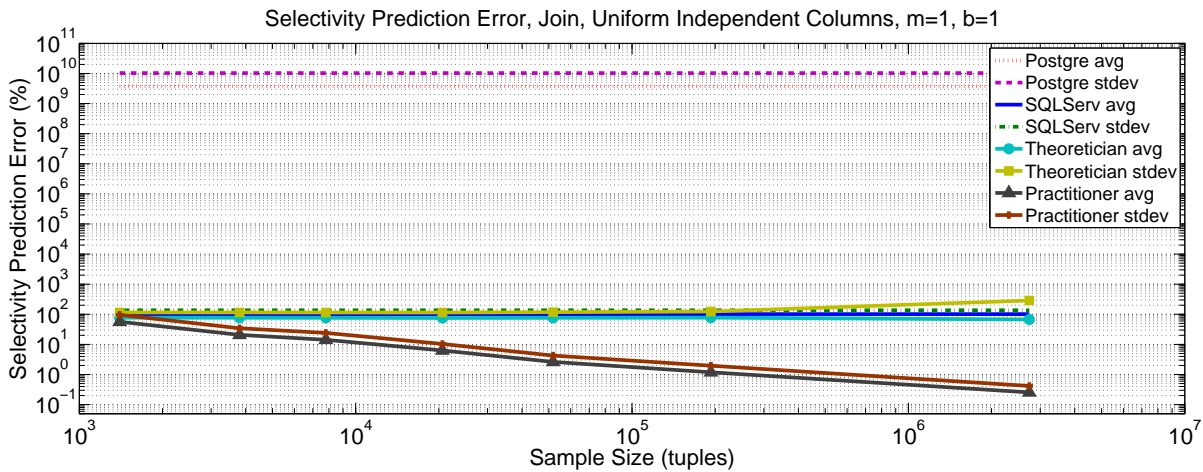}
  \caption{Join -- $m=1$, $b=1$}
  \label{fig:join_k1_b1_errperc}
\end{figure}

Again,
we did not plot multiple curves for histograms with a different number of
buckets because the quality of the predictions did not improve as the histograms
became more fine-grained. To understand the big discrepancy between the accurate
predictions of our method and the wrong estimates computed by the histograms in
PostgreSQL  we note that  for some join queries, the
histograms predicted an output size on the order of the hundreds of thousands
tuples but the actual output size was zero or a very small number of tuples.
Observing the curves of
the average and the standard deviation of the percentage error for the
prediction obtained with our method, we can see that at the smaller sample sizes
the quality of the predictions only improves minimally with the sample size.
This is due to the fact that at small sizes our prediction for the join
operation is very often zero or very close to zero, because the output of the
query does not contain enough pairs of tuples from the sample of the Cartesian
product of the input table (i.e. pairs of tuples with the same value in the
$sampleindex$ column). In these cases, the prediction can not be accurate at all
(i.e. the error is 100\% if the original output contained some tuples, or 0\% if
the query returned an empty set in the large databases). As soon as the sample
size grows more, we can see first a jump to higher values of the percentage
error, which then behaves as expected, i.e., decreasing as the sample size
increases. 

In Fig.~\ref{fig:join_k1_b1_errperc} we also show a
comparison between the percentage error of predictions obtained using our method
in two different ways: the ``theoretically correct'' way that makes use of the
number of pairs of tuples with the same value in the $sampleindex$ column and
the ``practitioner'' way which uses the size of the output of the join operation
in the sample, therefore ignoring the $sampleindex$ column. Recall that we had
to add the $sampleindex$ column because Thm.~\ref{thm:eapprox} requires a
uniform sample of the Cartesian product of the input tables.
As it is evident from Fig.~\ref{fig:join_k1_b1_errperc}, the ``practitioner''
way of predicting selectivity gives very good results at small sample sizes
(although it does not offer theoretical guarantees). These results are similar
in spirit, although not equivalent, to the theoretical conclusions presented by
Haas~et~al.~\citeyearpar{HaasNSS96} in the setting of selectivity estimation
using online sampling.

\section{Conclusions}\label{sec:compar}
We develop a novel method for estimating the selectivity of queries by executing
it on a concise, properly selected, sample of the database. We present a
rigorous analysis of our method and extensive experimental results demonstrating
its efficiency and the accuracy of its predictions.

Most commercial databases use histograms built on a single column, for selectivity
estimation. There has also been significant research on improving the estimate
using multidimensional
histograms~\citep{BrunoCG01,PoosalaI97,SrivastavaHMKT,WangS03} and join
synopses~\citep{AcharyaGPR99}. The main advantage of our method is that it gives
uniformly accurate estimates for the selectivity of any query within a predefined
VC-dimension range. Method that collect and store pre-computed statistics gives
accurate estimates only for the relations captured by the collected statistics,
while estimates of any other relation relies on an independence assumption.
, which give probabilistic guarantees on the error of the predicted
selectivity,%

To match the accuracy of our new method with histograms and  join synopses
one would need to create, for each table, a multidimensional histogram where the
number of dimensions is equal to the number of columns in the tables. The space
needed for a multidimensional histogram is exponential in the number of
dimensions, while the size of our sample representation is almost linear in that parameter. 
Furthermore, to estimate the selectivity for join operations
one would need to create join synopses for all pairs of columns in the database,
again in space that grows exponential in the number of columns.

It is interesting to note that the highly theoretical concept of VC-dimension
leads in this work to an efficient and practical tool for an important data
analysis problem.



\end{document}